\newcommand{\mybreak} {\par\vspace{2mm}\noindent}
\newtheorem{theorem}{Theorem}
\newtheorem{definition}{Definition}
\newtheorem{remark}{Remark}
\newtheorem{lemma}{Lemma}
\newenvironment{proof}{\noindent{\bf Proof.}}
{\hspace*{\fill}$\Box$\par\vspace{4mm}}
\newcommand{\PAPERINO} {\texttt{RecognizePG}\xspace}
\newcommand{\PAPERINObis} {\texttt{RecognizeDPG}\xspace}
\newcommand{\U} {\text{Upper}}
\newcommand{\Cross} {\text{Cross}}
\newcommand{\NULL} {\texttt{NULL}}
\newcommand{\Colored} {\text{Colored}}
\newcommand{\Blank} {\text{Blank}}
\author[Lorenzo Balzotti]{Lorenzo Balzotti\affiliationmark{1}}
\title[Simpler and Unified Recognition Algorithm for PGs and DPGs]{Simpler and Unified Recognition Algorithm for Path Graphs and Directed~Path~Graphs}
\affiliation{
  Department of Statistical Sciences, Sapienza University of Rome, Rome, Italy.}
\keywords{path graphs, directed path graphs, intersection graphs, recognition algorithms}
\begin{document}
\publicationdata{vol. 27:2}{2025}{9}{10.46298/dmtcs.13355}{2024-04-06; 2024-04-06; 2024-11-15}{2025-03-12}

\maketitle

\begin{abstract}
\vspace{2.5mm}
A path graph is the intersection graph of paths in a tree. A directed path graph is the intersection graph of paths in a directed tree. Even if path graphs and directed path graphs are characterized very similarly, their recognition algorithms differ widely. We further unify these two graph classes by presenting the first recognition algorithm for both path graphs and directed path graphs. We deeply use a recent characterization of path graphs, and we extend it to directed path graphs. Our algorithm does not require complex data structures and has an easy and intuitive implementation, simplifying recognition algorithms for both graph classes.
\end{abstract}

\section{Introduction}

A \emph{path graph} is the intersection graph of paths in a tree. A \emph{directed path graph} is the intersection graph of paths in a directed tree. In this article we present a recognition algorithm for  both path graphs and directed path graphs.


Path graphs are introduced by~\cite{renz}, who also gives a combinatorial, non-algorithmic characterization. The second characterization is due to~\cite{gavril_UV_algorithm}, and it leads him to present a first recognition algorithm with $O(n^4)$ time complexity (in this paper, the input graph has $n$ vertices and $m$ edges). The third characterization is due to~\cite{mew}, and it is used by Sch\"{a}ffer to build a faster recognition algorithm~\cite{schaffer}, that has $O(p(m+n))$ time complexity (where $p$ is the number of \emph{cliques}, namely, maximal induced complete subgraphs). Later, \cite{chaplick} gives a recognition algorithm with the same time complexity that uses PQR-trees. L\'{e}v\^{e}que, Maffray and Preissmann present the first characterization by forbidden subgraphs~\cite{bfm}, while, recently, Apollonio and Balzotti give another characterization~\cite{ab}, that builds on~\cite{mew}. Another algorithm is proposed in~\cite{dah} and claimed to run in $O(m+n)$ time, but it has only appeared as an extended abstract and is not considered to be complete or correct (see comments in~[\cite{chaplick}, Section 2.1.4]).

Directed path graphs are characterized by~\cite{gavril_DV_algorithm}, in the same article he also gives the first recognition algorithm that has $O(n^4)$ time complexity. In the article cited above, \cite{mew} give the second characterization of directed path graphs, that yields a recognition algorithm with $O(n^2m)$ time complexity. \cite{chaplick-gutierrez} present a linear time algorithm able to establish if a path graph is a directed path graph (actually, their algorithm requires the \emph{clique path tree} of the input graph, we refer to Section~\ref{section:old_characterizations} for further details). This implies that the algorithms in~\cite{chaplick,schaffer} can be used to obtain a recognition algorithm for directed path graphs with the same time complexity. At the state of art, this technique leads to the fastest algorithms.

Path graphs and directed path graphs are classes of graphs between \emph{interval graphs} and \emph{chordal graphs}. A \emph{hole} is is a chordless cycle of length at least four.  A graph is a chordal graph if it does not contain a \emph{hole} as an induced subgraph. \cite{gavril1} proves that a graph is chordal if and only if it is the intersection graph of subtrees of a tree. We can recognize chordal graphs in $O(m+n)$ time~\cite{sc18,sc19}.

A graph is an interval graph if it is the intersection graph of a family of intervals on the real line; or, equivalently, the intersection graph of a family of subpaths of a path. \cite{lekkerkerker-boland} characterize interval graphs as chordal graphs with no asteroidal triples, where an asteroidal triple is a stable set of three vertices such that each pair is connected by a path avoiding the neighborhood of the third vertex. Interval graphs can be recognized in linear time by several algorithms~\cite{booth-lueker,corneil-olariu,habib-mcconnell,hsu,hsu-mcconnell,korte-mohring,mcconnel-spinrad}.

We now introduce a last class of intersection graphs. A \emph{rooted path graph} is the intersection graph of directed paths in a rooted tree. Rooted path graphs can be recognized in linear time by using the algorithm by~\cite{dietz}. All inclusions among the introduced classes of graphs are summarized in the following chain of strict inclusion relations:
\begin{equation*}
\text{interval graphs $\subset$ rooted path graphs $\subset$ directed path graphs $\subset$ path graphs $\subset$ chordal graphs}.
\end{equation*}

\paragraph{Our contribution} 
In this article we present the first recognition algorithm for both path graphs and directed path graphs, it has $O(p(m+n))$ time complexity.
Our algorithm is based on the characterization of path graphs and directed path graphs given by~\cite{mew}, and we strictly use the recent characterization by~\cite{ab} that simplifies the one in~\cite{mew}.

On the side of path graphs, we believe that compared to~\cite{chaplick,schaffer}, our algorithm provides a simpler and very shorter treatment (the whole explanation is in Section~\ref{section:recognition_algorithm_e_correctness}). Moreover, it does not need complex data structures while the algorithm in~\cite{chaplick} is based on PQR-trees and the algorithm in~\cite{schaffer} is a complex backtracking algorithm.

On the side of directed path graphs, at the state of art, our algorithm is the only one that does not use the results in~\cite{chaplick-gutierrez}, in which it is given a linear time algorithm able to establish whether a path graph is a directed path graph too (see Theorem~\ref{th:il_piu_importante} for further details). Thus, prior to this paper, it was necessary to implement two algorithms to recognize directed path graphs: a recognition algorithm for path graphs as in~\cite{chaplick,schaffer}, and the algorithm in~\cite{chaplick-gutierrez} that in linear time is able to determining whether a path graph is also a directed path graph. Instead, we obtain our recognition algorithm for directed path graphs by slightly modifying the recognition algorithm for path graphs. In this way, we do not improve the running time, rather we provide a simpler algorithm using known characterizations.


\paragraph{Our approach}

The recognition algorithm \PAPERINO for path graph  is mainly built on path graphs' characterization in~\cite{ab}. This characterization decomposes the input graph $G$ by clique separators as in~\cite{mew}, then at the recursive step one has to find a proper vertex coloring of an antipodality graph satisfying some particular conditions; see Section~\ref{section:ab} for the definition of \emph{clique separator} and \emph{antipodality graph}, and Theorem~\ref{th:coloring} for the characterization. In a few words, an antipodality graph has as vertex set some subgraph of $G$, and two vertices are connected if the corresponding subgraphs of $G$ are antipodal. Building all the antipodality graphs by brute force requires more time than the overall complexity of algorithms in~\cite{chaplick,schaffer}. We overcome this problem by visiting the connected components in a smart order. This order allows us to establish all the antipodality relations faster. This is done in Step~\ref{al:inizialization}, Step~\ref{al:crossing}, and Step~\ref{al:antipodal} that are the core of algorithm \PAPERINO.

On the side of directed path graphs, we first extend the characterization in~\cite{ab} for path graphs to directed path graphs, and then we adapt the recognition algorithm for path graphs to directed path graphs, obtaining algorithm \PAPERINObis.

We stress that the algorithm \PAPERINO and algorithm \PAPERINObis have minimal differences, and they could be both merged in the same algorithm. However, for enhanced readability and clarity, we prefer to separate them.

\paragraph{Organization} The paper is organized as follows. In Section~\ref{section:old_characterizations} we present the characterization of path graphs and directed path graphs given by~\cite{mew}, while in Section~\ref{section:ab} we explain the characterization of path graphs by~\cite{ab}. In Section~\ref{section:recognition_algorithm_e_correctness} we present our recognition algorithm for path graphs, we prove its correctness, we present some implementation details and we compute its time complexity. Finally, in Section~\ref{section:algorithm_DPG} we provide a similar analysis for directed path graphs.

\section{Earlier characterizations of path graphs and directed path graphs}\label{section:old_characterizations}

In this section, we present the characterization of path graphs and directed path graphs as described in~\cite{mew}. We start with a formal definition of these classes of graphs.

We denote by $G=(V,E)$ a finite connected undirected graph, where $V$, $|V|=n$, is a set of \emph{vertices} and $E$, $|E|=m$, is a collection of pairs of vertices called \emph{edges}. Let ${P}$ be a finite family of nonempty sets. The intersection graph of ${P}$ is obtained by associating each set in ${P}$ with a vertex and by connecting two vertices with an edge exactly when their corresponding sets have a nonempty intersection. The intersection graph of a family of paths in a tree is called \emph{path graph}. The intersection graph of a family of directed paths in a directed tree is called \emph{directed path graph}. We say that two directed or undirected paths intersect if and only if they have at least one vertex in common.

The first characterizations of path graphs and directed path graphs are due to~\cite{gavril_DV_algorithm,gavril_UV_algorithm}.
We let $\mathbf{C}$ denote the set of cliques of $G$, and for every $v\in V(G)$ let $\mathbf{C}_v=\{C\in\mathbf{C}\ |\ v\in C\}$. We recall that a clique is a maximal induced complete subgraph. Moreover, for a graph $G$ and for a subset $A$ of $V(G)$, we denote the graph induced by $A$ in $G$ by $G[A]$.

\begin{theorem}[\cite{gavril_DV_algorithm,gavril_UV_algorithm}]\label{th:Gavril_UV_DV}
A graph $G=(V,E)$ is a path graph (resp.~directed path graph) if and only if there exists a tree $T$ (resp.~directed tree $T$) with vertex set $\mathbf{C}$, such that for every $v\in V$, $T[\mathbf{C}_v]$ is a path (resp.~directed path) in $T$.
\end{theorem}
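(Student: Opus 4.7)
I would prove the two implications separately, focusing on the undirected case since the directed case follows by the same argument applied to a directed tree and directed paths. The sufficiency $(\Leftarrow)$ is direct: given a tree $T$ on $\mathbf{C}$ with the stated property, define $P_v := T(\mathbf{C}_v)$ for each $v\in V$, so that $\{P_v\}_{v\in V}$ is a family of paths in $T$. Two vertices $u,v$ of $G$ are adjacent if and only if they lie in a common maximal clique, i.e.\ $\mathbf{C}_u\cap\mathbf{C}_v\neq\emptyset$, if and only if $P_u$ and $P_v$ share a vertex of $T$. Hence $G$ is exactly the intersection graph of the paths $\{P_v\}_{v\in V}$.

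For the necessity $(\Rightarrow)$, start from a representation of $G$ as the intersection graph of a family $\{Q_v\}_{v\in V}$ of paths in some host tree $T'$. The key tool is the Helly property for subtrees of a tree: any pairwise intersecting family of subtrees has a common vertex. Applied to a maximal clique $C$, whose paths $\{Q_v\}_{v\in C}$ pairwise intersect, this yields a vertex $t_C\in V(T')$ lying on every $Q_v$ with $v\in C$. Writing $K_t := \{v \in V : t \in Q_v\}$, each $K_t$ is itself a clique of $G$, and maximality of $C$ forces $C = K_{t_C}$. I would then obtain $T$ from $T'$ by iteratively contracting every edge $tt'$ with $K_t \subseteq K_{t'}$, keeping the label $K_{t'}$, and finally pruning any vertex whose label is not a maximal clique. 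The resulting tree has vertex set in bijection with $\mathbf{C}$, and the image of each path $Q_v$ under these operations coincides with $T(\mathbf{C}_v)$, which is therefore a path.

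The main obstacle is the correctness of this contraction-and-pruning procedure: one has to verify that distinct maximal cliques never collapse into a single vertex of $T$, that the image of $Q_v$ remains a connected path (not merely a subtree) and equals exactly $\mathbf{C}_v$, and that the structure stays a tree throughout the process. For the directed version, $T'$ is a directed tree and each $Q_v$ is a directed path; the same contractions preserve orientations and produce a directed tree $T$ in which every $T(\mathbf{C}_v)$ is a directed path.
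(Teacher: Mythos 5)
First, a point of reference: the paper does not prove this theorem at all --- it is quoted from Gavril's papers as a known characterization --- so there is no in-paper argument to compare yours against, and your proposal has to stand on its own. Your sufficiency direction ($\Leftarrow$) is complete and correct. The necessity direction is not: you correctly set up the Helly argument, the labels $K_t=\{v: t\in Q_v\}$, and the identification $C=K_{t_C}$, but then the entire substance of Gavril's theorem is concentrated in the contraction-and-pruning step, and you explicitly list the three facts that would make it work (distinct maximal cliques do not collapse, the image of each $Q_v$ is a path with vertex set exactly $\mathbf{C}_v$, the result is a tree) without proving any of them. Worse, the procedure as stated is not sound: if you contract only edges $tt'$ whose labels satisfy $K_t\subseteq K_{t'}$, the process can terminate with an \emph{internal} vertex whose label is a non-maximal clique, and ``pruning'' an internal vertex disconnects the tree and can cut the image of some $Q_v$ in two. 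A correct version iterates the contraction on the \emph{current} merged labels until no two adjacent labels are comparable, and then one must prove: (i) at termination every surviving label is a maximal clique (because any non-maximal $K_t$ is contained in $K_{t'}$ for the neighbour $t'$ of $t$ on the $T'$-path towards the Helly vertex of a maximal clique extending $K_t$, so a comparable adjacent pair would remain); (ii) each maximal clique $C$ labels exactly one surviving vertex (because $\{t: K_t= C\}=\bigcap_{v\in C}Q_v$ is a subtree of $T'$, so two surviving vertices labelled $C$ would force two adjacent blobs with equal, hence comparable, labels); and (iii) the surviving vertices whose blob meets $Q_v$ are precisely those labelled by maximal cliques containing $v$. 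None of this appears in your write-up, and it is precisely the content of the theorem rather than a routine verification.

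The directed case is also dismissed too quickly. That ``the same contractions preserve orientations'' does not by itself show that each image $T(\mathbf{C}_v)$ remains a \emph{directed} path (all arcs consistently oriented after the merges); this is an additional claim needing its own argument, which is why Gavril's characterization of directed path graphs is a separate result from the undirected one.
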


The tree $T$ of the previous theorem is called the \emph{clique path tree of $G$} if $G$ is a path graph or the \emph{directed clique path tree of $G$} if $G$ is a directed path graph. In Figure~\ref{fig:examples_Path_Graph_and_cpt}, the left part shows a path graph $G$, and on the right there is a clique path tree of $G$. Symmetrically, in Figure~\ref{fig:examples_Directed_Path_Graph_and_dcpt}, the left part shows a directed path graph $G$, and on the right there is a directed clique path tree of $G$. 

\begin{figure}[h]
\centering
	\begin{subfigure}{5cm}
\begin{overpic}[width=5cm]{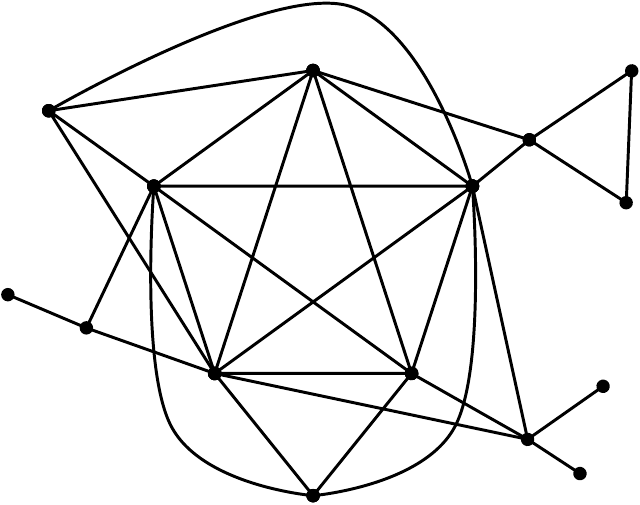}
\put(23,75){$G$}
\put(47,70){$1$}
\put(76.7,45.5){$2$}
\put(67,20){$3$}
\put(30.5,13){$4$}
\put(23,52.5){$5$}
\put(81,60){$6$}
\put(78.5,3){$7$}
\put(46.9,-6){$8$}
\put(11,20){$9$}
\put(0,64){$10$}
\put(95,70){$11$}
\put(94.5,39){$12$}
\put(92,21){$13$}
\put(88,-3){$14$}
\put(-4,35){$15$}
\end{overpic}
\end{subfigure}
\qquad\qquad\qquad
	\begin{subfigure}{5cm}
\begin{overpic}[width=4cm,percent]{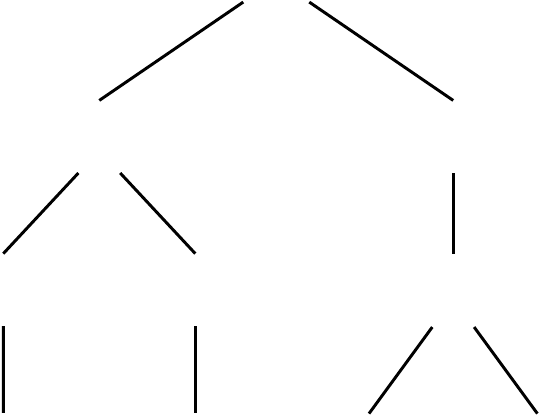}
\put(27,80){$\{1,2,3,4,5\}$}
\put(-7,48){$\{1,2,4,5,10\}$}
\put(59,48){$\{2,3,4,5,8\}$}
\put(-15,20){$\{1,2,6\}$}
\put(21,20){$\{4,5,9\}$}
\put(64,20){$\{2,3,4,7\}$}
\put(-20,-9){$\{6,11,12\}$}
\put(23,-9){$\{9,15\}$}
\put(55,-9){$\{7,13\}$}
\put(88,-9){$\{7,14\}$}
\end{overpic}
\end{subfigure}
\vspace{3mm}
\caption{a path graph $G$ (on the left) and a clique path tree of $G$ (on the right).}
\label{fig:examples_Path_Graph_and_cpt}
\end{figure}

\begin{figure}[h]
\centering
	\begin{subfigure}{3.5cm}
\begin{overpic}[width=3cm,percent]{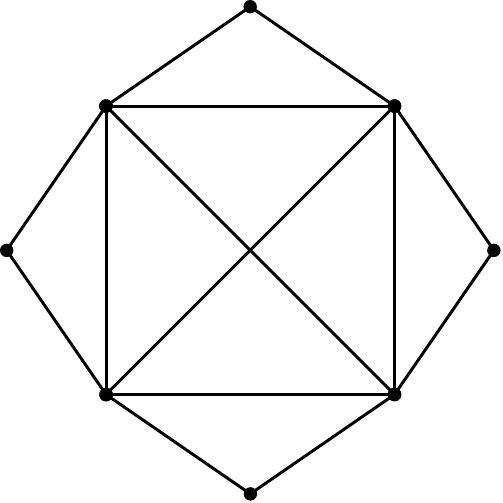}
\put(29.5,92.5){$G$}

\put(14,81){1}
\put(80,81){2}
\put(81,14){3}
\put(13,13){4}
\put(47.5,-10){5}
\put(100.5,50){6}
\put(48,102){7}
\put(-7.5,50){8}

\end{overpic}
\end{subfigure}
\qquad\qquad\qquad
	\begin{subfigure}{3.6cm}
\begin{overpic}[width=3.6cm,percent]{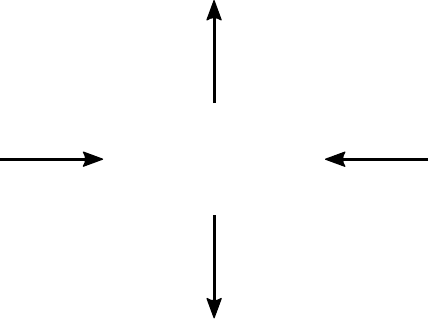}
\put(28,34.52){$\{1,2,3,4\}$}

\put(32,-9){$\{3,4,5\}$}
\put(32,78){$\{1,2,7\}$}

\put(-37,34.52){$\{1,4,8\}$}
\put(101,34.52){$\{2,3,6\}$}
\end{overpic}
\end{subfigure}
\vspace{1mm}
\caption{a directed path graph $G$ (on the left) and a directed clique path tree of $G$ (on the right).}
\label{fig:examples_Directed_Path_Graph_and_dcpt}
\end{figure}

Theorem~\ref{th:Gavril_UV_DV} specializes the celebrated characterization of chordal graphs, still due to~\cite{gavril1}, as those graphs possessing a \emph{clique tree} as stated below.

\begin{theorem}[\cite{gavril1}]\label{thm:Gavril_chordal}
A graph $G$ is a chordal graph if and only if it there exists a tree $T$, called a \emph{clique tree}, with vertex set $\mathbf{C}$ such that, for every $v\in V$, $T[\mathbf{C}_v]$ is a tree in $T$.
\end{theorem}

The main goal of our paper is: given a graph $G$, to find a (directed) clique path tree of $G$ or to say that $G$ is not a (directed) path graph. To achieve our goal, we adopt the same approach as in~\cite{mew}, by decomposing $G$ recursively by clique separators.

\cite{mew} characterized several classes of intersection families, two of which are path graphs and directed path graphs. Now we present some of their theorems and lemmata that concern path graphs and directed path graphs.

A clique is a \emph{clique separator} if its removal disconnects the graph in at least two connected components. A graph with no clique separator is called \emph{atom}. For example, every cycle has no clique separator, and the butterfly/hourglass graph (the graph obtained by joining two copies of the cycle graph $C_3$ with a common vertex) has two cliques and it is an atom. In~\cite{mew} it is proved that an atom is a path graph and/or a directed path graph if and only if it is a chordal graph; moreover, every chordal graph that is an atom has at most two cliques.

From now on, let us assume that a clique $C$ separates $G=(V,E)$ into subgraphs $\gamma_i=G[C\cup V_i]$, $1\leq i\leq s$, $s\geq2$. Let $\Gamma_C=\{\gamma_1,\ldots,\gamma_s\}$.

\cite{mew} defined the following binary relations on $\Gamma_C$. A clique $K$ of a member $\gamma$ in $\Gamma_C$ is called a  \emph{relevant clique} if $K \cap C\not=\emptyset$.
\begin{itemize}\itemsep0em
\item \emph{Attachedness}, denoted by $\bowtie$ and defined as $\gamma\bowtie \gamma'$ if and only if there is a relevant clique $K$ of $\gamma$ and a relevant clique $K'$ of $\gamma'$ such that $K\cap K'\neq\emptyset$.
\item \emph{Dominance}, denoted by $\leq$ and defined as $\gamma\leq \gamma'$ if and only if $\gamma\bowtie \gamma'$ and for each relevant clique $K'$ of $\gamma'$ either $K\cap C\subseteq K'\cap C$ or $K\cap K'\cap C=\emptyset$ for each relevant clique $K$ of $\gamma$.
\item \emph{Antipodality}, denoted by $\leftrightarrow$ and defined as $\gamma \leftrightarrow\gamma'$ if and only if there are relevant cliques $K'$ of $\gamma'$ and $K$ of $\gamma$ such that $K\cap K'\cap C\not=\emptyset$ and $K\cap C$ and $K'\cap C$ are inclusion-wise incomparable.
\end{itemize}

We observe that if $\gamma\bowtie\gamma'$, then either $\gamma\leq\gamma'$, or $\gamma'\leq\gamma'$ or $\gamma\leftrightarrow\gamma'$. To better understand these relations we use Figure~\ref{fig:pentagono_colorato}, that shows a clique separator $C$ of $G$ of Figure~\ref{fig:examples_Path_Graph_and_cpt} and its connected components. We stress that $C=\{1,2,3,4,5\}$, $\Gamma_C=\{\gamma_1,\gamma_2,\gamma_3,\gamma_4,\gamma_5\}$, where $\gamma_1=\{1,2,6,11,12\}$, $\gamma_2=\{2,3,4,7,13,14\}$, $\gamma_3=\{2,3,4,5,8\}$, $\gamma_4=\{4,5,9,15\}$ and $\gamma_5=\{1,2,4,5,10\}$. Clearly, $\gamma_1,\ldots,\gamma_5$ are path graphs. It holds that $\gamma_5\geq\gamma_1$, $\gamma_5\geq\gamma_4$, $\gamma_3\geq\gamma_2$, $\gamma_3\geq\gamma_4$, $\gamma_1\leftrightarrow\gamma_2$, $\gamma_1\leftrightarrow\gamma_3$, $\gamma_3\leftrightarrow\gamma_5$, $\gamma_2\leftrightarrow\gamma_5$, $\gamma_2\leftrightarrow\gamma_4$, and the relation  $\bowtie$ follows from above.

\begin{figure}[h]
\centering
\begin{subfigure}{5cm}
\begin{overpic}[width=5cm,percent]{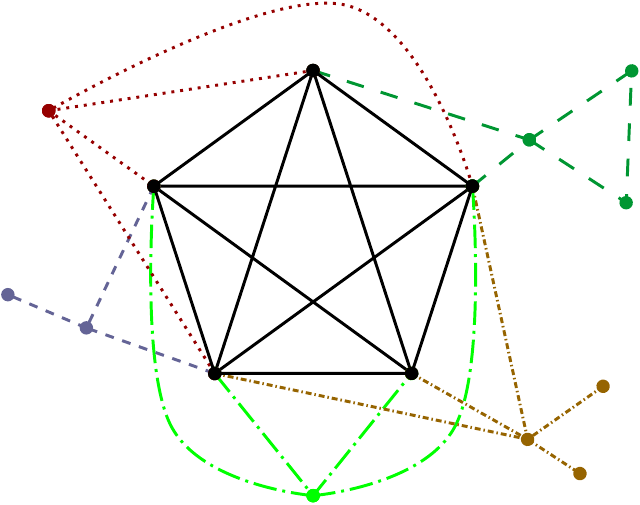}

\put(100,55){$\gamma_1$}
\put(93,10){$\gamma_2$}
\put(58,-2){$\gamma_3$}
\put(0,23.5){$\gamma_4$}
\put(24,75){$\gamma_5$}
\put(46,40){$C$}

\put(15,70){$G$}
\put(47,70){$1$}
\put(76.7,45.5){$2$}
\put(67,20){$3$}
\put(30.5,13.5){$4$}
\put(23,53){$5$}
\put(81,60){$6$}
\put(78.5,3){$7$}
\put(46.9,-6){$8$}
\put(11,20){$9$}
\put(0,64){$10$}
\put(95,70){$11$}
\put(94.5,39){$12$}
\put(92,21){$13$}
\put(88,-3){$14$}
\put(-4,35){$15$}

\end{overpic}
\end{subfigure}
\caption{a clique separator $C$ in $G$ of Figure~\ref{fig:examples_Path_Graph_and_cpt}. The connected components $\gamma_1,\ldots,\gamma_5$ are highlighted by different colors and hatchings.}
\label{fig:pentagono_colorato}
\end{figure}

Now we present the two theorems in~\cite{mew} that characterize path graphs and directed path graphs. A subgraph $\gamma_i$ is a neighboring subgraph of $v\in V(C)$ if some vertex in $V_i$ is adjacent to $v$.  For a function $f$ and a set $X$ we define $f(X)=\bigcup_{x\in X}f(x)$. The following theorem characterizes directed path graphs.

\begin{theorem}[\cite{mew}]\label{mew1}
A chordal graph $G$ is a path graph if and only if $G$ is an atom or for a clique separator $C$ each graph $\gamma\in \Gamma_C$ is a path graph and there exists $f:\Gamma_C\rightarrow [s]$ such that:
\renewcommand{\theenumi}{\thetheorem.\arabic{enumi}}
\begin{enumerate}
\item\label{com:mw_i} if $\gamma\leftrightarrow \gamma'$, then $f(\gamma)\not=f(\gamma')$;
\item\label{com:mw_ii} if $\gamma$, $\gamma'$ and $\gamma''$ are neighboring subgraphs of $v$, for some vertex $v\in C$, then $|f(\{\gamma,\gamma',\gamma''\})|\leq 2$.
\end{enumerate}
\end{theorem}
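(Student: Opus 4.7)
My plan is to prove the equivalence in two directions, interpreting $f$ as the assignment of each $\gamma_i$ to a branch of $C$ in a clique path tree of $G$. The atom case is immediate from the earlier-cited fact that a chordal atom is a path graph, so the heart of the argument is the recursive step.

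For the forward direction $(\Rightarrow)$, let $T$ be a clique path tree of $G$; since $C$ is a clique of $G$, it is a node of $T$, and deleting $C$ yields subtrees $T_1,\dots,T_r$. Because $V_i$ is a connected component of $G\setminus C$, any two cliques of $\gamma_i$ distinct from $C$ are joined by a path in $T$ avoiding $C$, hence they lie in a single $T_j$; set $f(\gamma_i):=j$. For (i), assume for contradiction that $\gamma\leftrightarrow\gamma'$ is witnessed by relevant cliques $K,K'$ with $v\in K\cap K'\cap C$ and $f(\gamma)=f(\gamma')=j$. Then $K,K'$ both lie on the sub-path of $T_j$ formed by cliques containing $v$, which starts at the neighbor of $C$ in $T_j$; walking away from $C$ along this sub-path, the sets $K_i\cap C$ shrink monotonically (if $w\in K_i\cap C$ is closer to $C$ in $T$ than some $K_{i'}$, the path of cliques containing $w$ forces $w\in K_{i'}$), so $K\cap C$ and $K'\cap C$ are comparable, contradicting antipodality. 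For (ii), the cliques containing $v\in C$ form a path through $C$ in $T$ that uses at most two branches of $C$; thus three subgraphs neighboring $v$ cannot be distributed among three different $f$-values.

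For the backward direction $(\Leftarrow)$, given clique path trees $T_i$ of each $\gamma_i$ (each containing $C$ as a node) and a valid $f$, build a clique path tree of $G$ by identifying the $C$-nodes of all the $T_i$ and arranging their branches around $C$ in groups indexed by $f$. Within each class $f^{-1}(j)$, condition (i) forbids antipodality, so by the trichotomy $\gamma\leq\gamma'$, $\gamma'\leq\gamma$, or $\gamma\leftrightarrow\gamma'$ on attached pairs, the class is partially ordered by dominance; chain the corresponding $T_i$'s into a single branch of $C$ in a way compatible with this order (non-attached pairs are placed freely since their relevant cliques share no $C$-vertices). For $v\notin C$, the cliques containing $v$ all live in one $T_i$ and form a path. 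For $v\in C$, condition (ii) confines those cliques to at most two branches of $C$, and within each branch the dominance-compatible order ensures they form a sub-path ending at $C$, so altogether they form a path through $C$.

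The main obstacle is the backward direction: ensuring that chaining together several $T_i$'s inside a single $f$-class preserves the path property for every $v\in C$ shared by cliques of distinct members. The crux is to show that if $\gamma\leq\gamma'$ and a relevant clique $K$ of $\gamma$ contains $v$, then every relevant clique of $\gamma'$ placed between $K$ and $C$ in the chain also contains $v$. This is exactly what the dominance definition guarantees: each relevant clique $K'$ of $\gamma'$ either satisfies $K\cap C\subseteq K'\cap C$ for every relevant $K$ of $\gamma$ (so $v\in K'$) or is disjoint from $\gamma$ on $C$ (so no chaining conflict arises at $v$). This prevents a $v$-free clique from interrupting the $v$-path in the glued tree and is what makes the reconstruction work.
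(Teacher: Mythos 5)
First, a framing point: the paper does not prove this statement at all --- Theorem \ref{mew1} is quoted from Monma and Wei \cite{mew} and used as a black box --- so there is no in-paper proof to compare against, and you are in effect re-deriving a substantial external result. Your forward direction is the standard argument and is essentially sound: placing each $\gamma_i$ in the component of $T-C$ that contains its non-$C$ cliques is well defined because $G(V_i)$ is connected, and both (i) and (ii) follow from the path structure of the cliques containing a fixed $v\in C$. (One parenthetical is garbled: the traces $K\cap C$ shrink as one walks \emph{away} from $C$ because a vertex $w\in C$ lying in a far clique must lie in every clique on the tree path back to $C$; the implication you wrote points the wrong way, though the conclusion you draw from it is the right one.)

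The backward direction, which is where all the difficulty of Monma and Wei's theorem lives, has genuine gaps. Most concretely, you assert that non-attached pairs ``share no $C$-vertices'' and can therefore be placed freely; this inverts the definition. By the paper's definition, $\gamma\bowtie\gamma'$ holds when \emph{some} pair of relevant cliques is disjoint, so $\gamma\not\bowtie\gamma'$ means \emph{every} relevant clique of $\gamma$ meets \emph{every} relevant clique of $\gamma'$ (necessarily inside $C$). Such pairs are neither comparable nor antipodal, yet they do share $C$-vertices, so serializing them within one branch while keeping every $v$-path contiguous requires its own argument, which you do not give. Second, your gluing tacitly assumes $C$ is a leaf of each $T_i$ (otherwise a single $\gamma_i$ already contributes two branches at $C$ and ``chaining'' is ill-defined); this is a lemma of \cite{mew} that you neither state nor prove. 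Third, your ``crux'' paragraph only rules out a $v$-free clique of $\gamma'$ sitting between $C$ and the attachment point of $T_\gamma$; it does not show that one can choose a \emph{single} attachment node valid for all $v\in W(\gamma)$ simultaneously, i.e., that $F(\gamma',v)$ is the same clique for every $v\in W(\gamma)$ when $\gamma\le\gamma'$ (this is precisely the content of Lemma \ref{lemma:geq}). Without that, the cliques containing some $v$ could branch at the attachment point rather than form a path. These are exactly the hard steps of the reconstruction, and the sketch does not get past them.
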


\begin{theorem}[\cite{mew}]\label{mew2}
A chordal graph $G$ is a directed path graph if and only if $G$ is an atom or for a clique separator $C$ each graph $\gamma\in \Gamma_C$ is a path graph and the $\gamma_i$'s can be 2-colored such that no antipodal pairs have the same color.
\end{theorem}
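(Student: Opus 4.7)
The plan is to argue by induction on $|V(G)|$ using the clique separator decomposition. The atom case is the one cited from \cite{mew}: an atom is a directed path graph if and only if it is chordal. So I assume $C$ is a clique separator of $G$ producing $\Gamma_C=\{\gamma_1,\ldots,\gamma_s\}$.

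For necessity, I start from a directed clique path tree $T$ of $G$. Since $C$ is a clique, it is a node of $T$, and deleting $C$ partitions $T$ into subtrees $T_1,\ldots,T_s$ (one per $\gamma_i$), each reconnected to $C$ by a single tree-edge $e_i$. I color $\gamma_i$ with color $1$ if $e_i$ is oriented toward $C$ and with color $2$ otherwise. Each $\gamma_i$ inherits the directed path representation on $T_i\cup\{C\}$, so it is itself a directed path graph. The crucial point is that antipodal pairs receive opposite colors: if $\gamma_i\leftrightarrow \gamma_j$ with witnessing relevant cliques $K_i,K_j$ and $v\in K_i\cap K_j\cap C$, then $K_i,K_j,C\in\mathbf{C}_v$, and since $K_i$ and $K_j$ lie on opposite sides of $C$ in $T$, the directed path $T(\mathbf{C}_v)$ must traverse $C$ in a single direction. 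Hence the gluing edges $e_i,e_j$ along this path carry opposite orientations, giving $\gamma_i$ and $\gamma_j$ different colors.

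For sufficiency, the $2$-colorability with antipodal pairs separated immediately satisfies both conditions of Theorem~\ref{mew1}: condition \ref{com:mw_i} is the antipodal separation itself, and \ref{com:mw_ii} follows automatically from the use of only two colors. Thus $G$ is already known to be a path graph. By induction each $\gamma_i$ admits a directed clique path tree $T_i$, and I would glue the $T_i$'s at $C$, orienting the gluing edge into $C$ for color-$1$ components and out of $C$ for color-$2$ components. It must then be verified that for every $v\in V(G)$ the induced subgraph $T(\mathbf{C}_v)$ is a directed path. For $v\in V_i$ this is immediate inside $T_i$, so only $v\in C$ is nontrivial.

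The main obstacle is precisely this last verification. For $v\in C$, the subtree $T(\mathbf{C}_v)$ is obtained by concatenating, through $C$, the directed subpaths $T_i(\mathbf{C}_v\cap V(\gamma_i))$ taken over the neighboring subgraphs of $v$. For their union to be a single directed path, at most one color-$1$ and one color-$2$ component can contribute a piece incident with $C$, and those two pieces must agree in direction at $C$. This forces a careful initial choice of each $T_i$ so that its orientation at $C$ is compatible with the color of $\gamma_i$ whenever $\mathbf{C}_v\cap V(\gamma_i)$ is nontrivial; the separation of antipodal pairs is exactly the combinatorial property that makes such a consistent choice possible. This is the delicate step, and it would be handled by invoking the structural lemmata of \cite{mew} describing how relevant cliques of $\gamma_i$ share vertices with $C$, together with the freedom left by the inductive hypothesis to place $C$ as an endpoint of $T_i$ when necessary.
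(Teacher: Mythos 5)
This theorem is imported verbatim from \cite{mew}; the paper contains no proof of it, so your attempt has to stand entirely on its own, and it has gaps in both directions. In the necessity direction your starting point is false: deleting the node $C$ from a directed clique path tree $T$ does \emph{not} in general yield one subtree per member of $\Gamma_C$, each hanging off $C$ by its own edge. For instance, if $C=\{a,b\}$ and $c,d$ are two pendant vertices adjacent only to $a$, then $\{a,b\}\!-\!\{a,c\}\!-\!\{a,d\}$, oriented left to right, is a directed clique path tree in which the relevant cliques of both members of $\Gamma_C$ lie in the \emph{same} component of $T-C$. So your coloring is not well defined as stated, and the assertion that the witnessing cliques $K_i,K_j$ of an antipodal pair ``lie on opposite sides of $C$'' is exactly the point that needs proof. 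It is true, but only because of the incomparability clause in the definition of $\leftrightarrow$: if $K_i$ and $K_j$ were on the same side, one of them, say $K_i$, would lie on the tree path from $C$ to the other, and then connectedness of $T(\mathbf{C}_y)$ for each $y\in K_j\cap C$ forces $K_j\cap C\subseteq K_i\cap C$. This gap is repairable, but as written you have assumed away the real content of the direction.

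The sufficiency direction is where the proof is genuinely missing. Gluing every $T_i$ to the single node $C$ makes $C$ a node of degree $s$, so for any $v\in C$ with three or more neighboring subgraphs the subtree $T(\mathbf{C}_v)$ is a star with at least three nontrivial arms and cannot be a directed path, no matter how the edges at $C$ are oriented and no matter which clique path tree $T_i$ is chosen for each $\gamma_i$: whether the arm of $\gamma_i$ at $v$ is nontrivial depends only on whether some relevant clique of $\gamma_i$ contains $v$, not on the choice of $T_i$. Hence the ``careful initial choice of each $T_i$'' you appeal to cannot exist; the construction itself has to change. What \cite{mew} actually does (their Proposition~9, which this paper invokes for Step~\ref{al:ct}) is to chain the members of each color class one behind another using the dominance order $\leq$, so that only two edges remain incident with $C$; the separation of antipodal pairs is precisely what makes that chaining consistent with every $T(\mathbf{C}_v)$ remaining a path. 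Deferring this to unspecified ``structural lemmata of \cite{mew}'' leaves the heart of the theorem unproved. (A smaller point: for the induction to return directed clique path trees you need each $\gamma_i$ to be a \emph{directed} path graph, which is what the recursive step of the directed algorithm actually tests, rather than merely a path graph as the statement literally reads.)
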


Note that the recursive step in Theorem~\ref{mew1} and Theorem~\ref{mew2} is a coloring problem: we have to find a proper vertex coloring $f$ of the \emph{antipodality graph} $A_C$, that is the graph on $\Gamma_C$ induced by $\leftrightarrow$, formally $A_C=(\Gamma_C,\{\gamma\gamma' \,|\, \gamma,\gamma'\in \Gamma_C \text{ and } \gamma\leftrightarrow\gamma'\})$.

\cite{chaplick-gutierrez} gave an algorithm that, by starting from the clique path tree of a path graph, builds the directed clique path tree, if it exists, in linear time. This result is summarized in the following theorem.

\begin{theorem}[\cite{chaplick-gutierrez}]\label{th:il_piu_importante}
If there exists a polynomial algorithm that tests if a graph $G$ is a path graph and returns a clique path tree of $G$ when the answer is ``yes", then there exists an algorithm with the same complexity to test if a graph is a directed
path graph.
\end{theorem}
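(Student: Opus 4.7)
The plan is to exhibit a reduction that composes the assumed path-graph recognition algorithm with the linear-time orientation procedure of \cite{chaplick-gutierrez} cited in the paragraph immediately preceding the statement. Since every directed path graph is in particular a path graph (the undirected tree underlying any directed clique path tree is itself a clique path tree), an instance $G$ that fails the first test can be immediately rejected as a directed path graph.

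More concretely, I would proceed in three steps. First, run the hypothesised polynomial algorithm on $G$; if it answers ``no'', return ``no'' and halt. Second, take the clique path tree $T$ of $G$ produced by the algorithm on the ``yes'' branch. Third, feed $T$ to the algorithm of \cite{chaplick-gutierrez}, which, in time linear in the size of $T$, decides whether $T$ admits an orientation that makes $T(\mathbf{C}_v)$ a directed path for every $v\in V$, and outputs such a directed clique path tree when one exists. By the Gavril characterisation quoted at the start of Section~\ref{section:old_characterizations}, $G$ is a directed path graph exactly when this last step succeeds, so the composed procedure is a correct recognition algorithm.

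For the complexity, note that the size of $\mathbf{C}$, and hence of $T$, is polynomial in $n+m$ (indeed bounded by $n$ for chordal graphs), so the linear post-processing of \cite{chaplick-gutierrez} is dominated by the cost of the path-graph recognition step. The total running time therefore matches the complexity of the given algorithm, as required.

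The only non-trivial ingredient is the linear-time algorithm of \cite{chaplick-gutierrez}, which is where the real work lies; however, that result is invoked as a black box here, so the main obstacle in writing out a self-contained proof is simply ensuring that the clique path tree returned by the recognition subroutine is in the precise format expected by the orientation procedure, i.e.\ vertices labelled by cliques of $G$ with $T(\mathbf{C}_v)$ a path for each $v$. This is exactly the guarantee in the hypothesis, so no extra massaging of the data structure is needed, and the reduction is complete.
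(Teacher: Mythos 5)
Your reduction is correct and is exactly the argument the paper intends: the theorem is quoted from \cite{chaplick-gutierrez} without proof, and the paragraph immediately preceding it already describes the composition you spell out (run the hypothesised recogniser, then apply the linear-time orientation procedure to the returned clique path tree, rejecting early on a ``no'' since every directed path graph is a path graph). The one point worth keeping explicit is that a path graph may admit several non-isomorphic clique path trees, so the correctness of the composition rests on the cited algorithm deciding directed-path-graph-ness starting from \emph{any} clique path tree it is handed --- which is indeed the guarantee of the black box you invoke, so no gap remains.
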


Theorem~\ref{th:il_piu_importante} implies that the algorithms in~\cite{chaplick,schaffer} can be extended to algorithms able to recognize directed path graphs with the same time complexity.

\section{A recent characterization of path graphs}\label{section:ab}

In this section we introduce some results and notations in~\cite{ab}, that give a new characterization of path graphs presented in Theorem~\ref{th:coloring}. Indirectly, some of these results allow us to efficiently recognize directed path graphs too (see Section~\ref{section:algorithm_DPG} and Theorem~\ref{th:coloring_DPG}).

For the following lemmata and discussions, let us fix a clique separator $C$ of $G$. We use round brackets to denote ordered sets. For $\ell\in \mathbb{N}$, we denote by $[\ell]$ the interval $\{1,\ldots,\ell\}$.

Let $\sim$ be the equivalence relation defined on $\Gamma_C$ by $\gamma\sim\gamma'\Leftrightarrow(\gamma\leq\gamma'\wedge\gamma'\leq\gamma)$, for all $\gamma,\gamma'\in\Gamma_C$. Moreover, given $\gamma\in\Gamma_C$, we denote by $[\gamma]_\sim$ the equivalence class of $\gamma$ w.r.t. $\sim$, i.e., $[\gamma]_\sim=\{\gamma'\in\Gamma_C \,|\, \gamma'\sim\gamma\}$. We say that $[\gamma]_\sim\leftrightarrow[\gamma']_\sim$ if and only if $\eta\leftrightarrow\eta'$ for any $\eta\in[\gamma]_\sim$ and $\eta'\in[\gamma']_\sim$; moreover, we say that $[\gamma]_\sim$ is a neighboring of $v$ if and only if $\eta$ is a neighboring of $v$, for any $\eta\in[\gamma]_\sim$. In the following lemma, see Lemma 2 in~\cite{ab}, we state that it is not restrictive to work in $\Gamma_C/\!\!\sim$.

\begin{lemma}[\cite{ab}]\label{lemma:tilde}
If there exists $f:\Gamma_C/\!\!\sim\to[\ell]$ satisfying conditions~\ref{com:mw_i} and~\ref{com:mw_ii}, then $\widetilde{f}:\Gamma_C\to[\ell]$ defined by $\widetilde{f}(\gamma)=f([\gamma]_\sim)$ satisfies conditions~\ref{com:mw_i} and~\ref{com:mw_ii}.
\end{lemma}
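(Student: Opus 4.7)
The proof is essentially a compatibility check: the conditions \ref{com:mw_i} and \ref{com:mw_ii} are phrased using the antipodality relation $\leftrightarrow$ and the neighboring-subgraph predicate, and the plan is to show that both of these are well-behaved under the equivalence $\sim$. Once that is in place, the function $\widetilde{f}$ only ever evaluates $f$ on the corresponding equivalence classes, so the two conditions for $\widetilde{f}$ will transfer directly from the two conditions for $f$.

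My first step would be to record the two compatibility facts (for which I would cite Remark~2 of \cite{ab}): (a) if $\gamma\sim\eta$ and $\gamma'\sim\eta'$, then $\gamma\leftrightarrow\gamma'$ iff $\eta\leftrightarrow\eta'$, so that the definition $[\gamma]_\sim\leftrightarrow[\gamma']_\sim$ given in the paper is consistent; and (b) if $\gamma\sim\eta$, then $\gamma$ is a neighboring subgraph of $v$ iff $\eta$ is. I would then note the purely definitional observation that the three relations $\leq$, $\geq$, $\leftrightarrow$ on $\bowtie$-comparable pairs are mutually exclusive, so that $\gamma\leftrightarrow\gamma'$ forces $\gamma\not\sim\gamma'$, i.e.\ $[\gamma]_\sim\neq[\gamma']_\sim$.

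For condition \ref{com:mw_i}, suppose $\gamma\leftrightarrow\gamma'$ in $\Gamma_C$. By the previous remark the classes $[\gamma]_\sim$ and $[\gamma']_\sim$ are distinct, and by (a) they satisfy $[\gamma]_\sim\leftrightarrow[\gamma']_\sim$. Applying \ref{com:mw_i} to $f$ on $\Gamma_C/\sim$ gives $f([\gamma]_\sim)\neq f([\gamma']_\sim)$, which is exactly $\widetilde{f}(\gamma)\neq\widetilde{f}(\gamma')$. For condition \ref{com:mw_ii}, suppose $\gamma,\gamma',\gamma''$ are neighboring subgraphs of some $v\in C$. By (b) the classes $[\gamma]_\sim,[\gamma']_\sim,[\gamma'']_\sim$ are each neighboring of $v$. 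If they are pairwise distinct, condition \ref{com:mw_ii} for $f$ on the quotient yields $|f(\{[\gamma]_\sim,[\gamma']_\sim,[\gamma'']_\sim\})|\leq 2$; if two of them coincide the bound $|f(\{[\gamma]_\sim,[\gamma']_\sim,[\gamma'']_\sim\})|\leq 2$ is immediate. Either way, $|\widetilde{f}(\{\gamma,\gamma',\gamma''\})|=|f(\{[\gamma]_\sim,[\gamma']_\sim,[\gamma'']_\sim\})|\leq 2$.

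The only real obstacle is the invariance facts (a) and (b); everything else is pushing definitions through the quotient map. Since the lemma is stated as a direct reference to Remark~2 of \cite{ab}, I would simply quote those facts rather than reprove them, and the remainder of the argument is the short case distinction above.
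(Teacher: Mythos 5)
Your proof is correct. The paper gives no proof of this lemma at all --- it is imported directly as Remark~2 of \cite{ab} --- so there is no in-paper argument to compare against; your reconstruction (the invariance of $\leftrightarrow$ and of the neighboring predicate under $\sim$, the observation that antipodality excludes dominance so antipodal subgraphs lie in distinct classes, and then pushing conditions \ref{com:mw_i} and \ref{com:mw_ii} through the quotient map, including the degenerate case of coinciding classes) is exactly the expected argument and is sound modulo the two invariance facts you correctly attribute to \cite{ab}.
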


From now on, unless otherwise noted, we assume that $\Gamma_C=\Gamma_C/\!\!\sim$. Even if this works from a theoretical point of view, we need to compute the classes of equivalence w.r.t. $\sim$ in our algorithms. 

We define $\U_C=\{u\in\Gamma_C \,|\, u\not\leq\gamma$, for all $u\neq\gamma\in\Gamma_C\}$. We note that $\U_C$ is the set of ``upper bounds" of $\Gamma_C$ w.r.t. $\leq$. From now on we fix an ordering $(u_1,u_2,\ldots,u_r)$  of $\U_C$. For all $i<j\in[r]$ we define

\begin{equation}\label{dgamma}	
D_{i}^C=\{\gamma\in \Gamma_C \ |\ \gamma\leq u_i  \text{ and } \gamma\nleq u_j,\,\,\forall j\in[r]\setminus \{i\}\},
\end{equation}
\begin{equation}\label{dgammagamma'}
D_{i,j}^C=\{\gamma\in\Gamma_C \ |\ \gamma\leq u_i,\gamma\leq u_j  \text{ and } \gamma\nleq u_k,\,\,\forall k\in[r]\setminus \{i,j\}\},
\end{equation}	
\begin{equation}\label{dmathcal}
\mathcal{D}^C=\Big(\bigcup_{i\in[r]}D^C_i\Big) \cup \Big(\bigcup_{i<j\in[r]}D^C_{i,j}\Big).
\end{equation}

If it is clear from the context, then we omit the superscript $C$. Note that $D_i\neq\emptyset$ for all $i\in[r]$, indeed $u_i\in D_i$. However, $D_{i,j}$ can be empty for some $i<j\in[r]$.

Given $\gamma,\gamma',\gamma''\in\Gamma_C$, we say that $\{\gamma,\gamma',\gamma''\}$ is an \emph{antipodal triangle} if $\gamma,\gamma',\gamma''$ are pairwise antipodal; moreover, if $\gamma,\gamma',\gamma''$ are also neighbouring of $v$, for some $v\in C$, then we say that $\{\gamma,\gamma',\gamma''\}$ is a \emph{full antipodal triangle}. We note that if $G$ is a path graph, then Theorem~\ref{mew1} implies that $\Gamma_C$ has no full antipodal triangle, for all clique separator $C$  of $G$.

In order to better understand the characterization of path graphs in~\cite{ab}, we provide the following lemma (see Remark 1 and Lemma 4 in~\cite{ab}), explaining that the absence of full antipodal triangle in $\U_C$ implies a rigid structure to the antipodality graph $A_C$. In particular, for any $D\in D^C$, we know how antipodality works between elements in $D$ and elements not in $D$.

\begin{lemma}[\cite{ab}]\label{lemma:elenco_trivial_2}
Let $C$ be a clique separator of $G$ and let $i<j\in[r]$. If $\U_C$ has no full antipodal triangle, then the following statements hold:
\renewcommand{\theenumi}{\thetheorem.\arabic{enumi}}
\begin{enumerate}[label=(\alph*), ref=\thelemma.(\alph*)]\itemsep0em
\item\label{item:partition} $\mathcal{D}$ forms a partition of $\Gamma_C$,
\item\label{item:antipodal_k_k'_1} $\gamma\leftrightarrow\gamma'$, $\gamma\in D_{i,j}$ and $\gamma'\not\in D_{i,j}$ $\Rightarrow$ $\gamma'\in D_i\cup D_j$,
\item\label{item:antipodal_D_i_Upper} $\gamma\leftrightarrow\gamma'$, $\gamma\in D_i$ and $\gamma'\not\in D_i$ $\Rightarrow\gamma\leftrightarrow u_k$ for $\gamma'\leq u_k$ and $k\neq i$.
\end{enumerate}
\end{lemma}

Referring to Figure~\ref{fig:pentagono_colorato}, up to permutations, $\U_C=(u_1,u_2)=(\gamma_3,\gamma_5)$, $\mathcal{D}^C=\{D_1,D_2,D_{1,2}\}$, $D_{1}=\{\gamma_2,\gamma_3\}$, $D_{2}=\{\gamma_1,\gamma_5\}$ and $D_{1,2}=\{\gamma_4\}$.

The characterization of path graphs given in~\cite{ab} is summarized in the following theorem (see Definition 2, Definition 3, and Theorem 4 in~\cite{ab}). Note that, as for Theorem~\ref{mew1}, in the recursive step we have to find a proper coloring $f$ of $A_C$.

\begin{theorem}[\cite{ab}]\label{th:coloring}
A chordal graph $G$ is a path graph if and only if $G$ is an atom or for a clique separator $C$ each graph $\gamma\in \Gamma_C$ is a path graph, $\U_C=(u_1,u_2,\ldots,u_r)$ has no full antipodal triangle and there exists $f:\Gamma_C\rightarrow [r+1]$ such that:
\renewcommand{\theenumi}{\thetheorem.\arabic{enumi}}
\begin{enumerate}
\item\label{item:col_2} for all $i\in[r]$, $f(u_i)=i$, 
\item\label{item:col_3} for all $i\in[r]$, $f(D_i)\subseteq\{i,r+1\}$,
\item\label{item:col_4} for all $i<j\in[r]$, $f(D_{i,j})\subseteq\{i,j\}$,
\item\label{item:col_5} for all $i\in[r]$, for all $\gamma\in D_i$, if $\exists u\in \U_C$ such that $\gamma\leftrightarrow u$, then $f(\gamma)=i$,
\item\label{item:col_6} for all $i<j\in[r]$, for all $\gamma\in D_{i,j}$ such that $\exists\gamma'\in D_{k}$, for $k\in\{i,j\}$, satisfying $\gamma\leftrightarrow\gamma'$, then $f(\gamma)=\{i,j\}\setminus\{k\}$,
\item\label{item:col_7} for all $D\in\mathcal{D}^C$, for all $\gamma,\gamma'\in D$ such that $\gamma\leftrightarrow\gamma'$, $f(\gamma)\neq f(\gamma')$.
\end{enumerate}
\end{theorem}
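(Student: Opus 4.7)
The approach is to show equivalence with Theorem~\ref{mew1} through the partition of $\Gamma_C$ into the blocks of $\mathcal{D}^C$. The atom case is immediate, so fix a clique separator $C$ and assume every $\gamma\in\Gamma_C$ is a path graph. For the necessity of ``$\U_C$ has no full antipodal triangle'', note that any full antipodal triangle $\{\gamma,\gamma',\gamma''\}$ in a path graph would require three pairwise distinct colours under any Monma--Wei colouring by \ref{com:mw_i}, yet only two by \ref{com:mw_ii}. Hence no full antipodal triangle exists anywhere in $\Gamma_C$, and by the result of \cite{ab} quoted just before the theorem, $\Gamma_C=\mathcal{D}^C$.

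For the backward direction I would show that any $f:\Gamma_C\to[r+1]$ satisfying \ref{item:col_2}--\ref{item:col_7} is itself a Monma--Wei colouring. For \ref{com:mw_i}, given $\gamma\leftrightarrow\gamma'$, I case-split on which blocks of $\mathcal{D}^C$ contain them. Pairs inside $\U_C$ are handled by \ref{item:col_2}. A pair consisting of $u_i$ and some $\gamma$ in $D_k$ or $D_{k,\ell}$ must satisfy $i\notin\{k,\ell\}$, because $u_i\leftrightarrow\gamma$ rules out $\gamma\leq u_i$; conditions \ref{item:col_3} and \ref{item:col_4} then keep $f(\gamma)$ away from $i$. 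A pair $\gamma\in D_i$, $\gamma'\in D_j$ with $i\neq j$ forces $\gamma\leftrightarrow u_j$ and $\gamma'\leftrightarrow u_i$ by a structural lemma of \cite{ab}; \ref{item:col_5} then yields $f(\gamma)=i\neq j=f(\gamma')$. Same-block pairs are handled by \ref{item:col_7}, while mixed $D_i$/$D_{j,k}$ and $D_{i,j}$/$D_{k,\ell}$ pairs reduce to \ref{item:col_6} via the same cross-implications. For \ref{com:mw_ii}, three neighbouring subgraphs of some $v\in C$ receiving three distinct colours must, by the above, be pairwise antipodal, hence form a full antipodal triangle; a case analysis in $\mathcal{D}^C$ shows this forces three distinct members of $\U_C$ to form such a triangle, contradicting the hypothesis.

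For the forward direction, starting from a Monma--Wei colouring $g$ of $\Gamma_C$, I set $f(u_i)=i$; for $\gamma\in D_i$ I set $f(\gamma)=i$ whenever some $u\in\U_C$ satisfies $\gamma\leftrightarrow u$ and $f(\gamma)=r+1$ otherwise; for $\gamma\in D_{i,j}$ I let \ref{item:col_6} dictate $f(\gamma)$ whenever a forcing $\gamma'\in D_i\cup D_j$ exists, and resolve the remaining freedom using $g$ so as to respect \ref{item:col_7}. The hard part is consistency: I must check that no $\gamma\in D_i$ is simultaneously forced to $i$ and to $r+1$, that no $\gamma\in D_{i,j}$ is forced by \ref{item:col_6} into both $\{i\}$ and $\{j\}$, and that two antipodal members of the same block can always be separated by two of its allowed colours. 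Each of these reduces to excluding certain antipodal configurations and ultimately depends on the structural lemmas of \cite{ab} describing how $\leftrightarrow$ interacts with the blocks of $\mathcal{D}^C$, together with the already-established absence of full antipodal triangles.
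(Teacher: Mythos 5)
First, a point of comparison: the paper does not prove Theorem~\ref{th:coloring} at all --- it is imported verbatim from \cite{ab} and followed only by an informal gloss on what each condition means. So your attempt cannot be matched against an in-paper argument; it can only be judged on its own terms as a reconstruction of the proof in \cite{ab}. Your outline does pick the right target (equivalence with Theorem~\ref{mew1} through the partition $\mathcal{D}^C$), and the necessity of the no-full-antipodal-triangle condition is argued correctly.

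However, two essential steps are asserted rather than proved, and they are precisely where the content of the theorem lives. (i) Backward direction, condition~\ref{com:mw_ii}: you claim that three neighbouring subgraphs of $v$ receiving three distinct colours must be pairwise antipodal. This is not immediate, because two neighbouring subgraphs of $v$ may be comparable under $\leq$ and still receive different colours: a chain such as $u_i$, a $\gamma'\in D_i$ coloured $r+1$, and a $\gamma\leq\gamma'$ with $\gamma\in D_{i,j}$ coloured $j$ would exhibit three colours with no antipodal pair. Ruling this out requires combining the forcing conditions \ref{item:col_5}--\ref{item:col_6} with the fact that attachedness to a third subgraph is inherited upward along $\leq$; ``a case analysis shows'' does not discharge it. (ii) Forward direction, consistency of the forced values: nothing in your sketch excludes, say, two antipodal $\gamma,\gamma'\in D_i$ each antipodal to some member of $\U_C$, which \ref{item:col_5} would force to the same colour $i$ while \ref{item:col_7} demands they differ. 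Showing that every such conflict produces a full antipodal triangle in $\U_C$ (and similarly for the $D_{i,j}$ conflicts) is the crux of the theorem, and deferring it to unspecified ``structural lemmas of \cite{ab}'' is circular when the statement being proved is itself the main result of \cite{ab}. Lemma~\ref{lemma:elenco_trivial_2} is the kind of tool needed, but in a self-contained proof you would have to establish it, not invoke it.
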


Let us comment briefly on Theorem~\ref{th:coloring}. We have to check for the absence of a full antipodal triangle in $\U_C$, this is an easy request because we are working on upper bounds. The first  condition sets the colors of elements in $\U_C$ and the second and third conditions state which colors we can use to color elements in $D$, for every $D\in\mathcal{D}^C$; note that every $D\in\mathcal{D}^C$ is 2-colored by $f$. To better understand the fourth and fifth conditions let $\Cross_C$ be the set of elements that are antipodal to elements belonging to different sets of partition $\mathcal{D}^C$, formally $\Cross_C=\{\gamma\in\Gamma_C \,|\, \gamma\in D$ for some $D\in\mathcal{D}^C$ and there exists $\gamma'\not\in D$ satisfying $\gamma\leftrightarrow\gamma'\}$. The fourth and fifth conditions explain how to color elements in $\Cross_C$. The last condition says that two antipodal elements in $D$, for any $D\in\mathcal{D}^C$, have different colors under $f$. Finally, we observe that in~\cite{ab} $u$ in condition~\ref{item:col_5} satisfies $u\not\in D_i$, but, for algorithmic convenience, we choose $u\in\U_C$ because of Lemma~\ref{lemma:elenco_trivial_2}.

\section{Recognition algorithm for path graphs}\label{section:recognition_algorithm_e_correctness}

In this section we introduce algorithm \PAPERINO, that is able to recognize path graphs. In Subsection~\ref{sub:correctness} we present the algorithm and we prove its correctness. In Subsection~\ref{section:implementation_details_&_running} we compute its time complexity.

\subsection{The algorithm and its correctness}\label{sub:correctness}

We present the algorithm \PAPERINO. Note that it is an implementation of Theorem~\ref{th:coloring} with very small changes. W.l.o.g., we assume that $G$ is connected, indeed a graph $G$ is a path graph if and only if all its connected components are path graphs. Moreover, we can obtain the clique path tree of $G$ by merging arbitrarily the clique path tree of every connected component.

\mybreak
\textbf{\PAPERINO}\\
\textbf{Input:} a graph $G$\\
\textbf{Output:} if $G$ is a path graph, then return a clique path tree of $G$; else QUIT
\begin{enumerate}
\item\label{al:chord} Test if $G$ is chordal. If not, then QUIT.

\item\label{al:Q} If $G$ has at most two cliques, then return a clique path tree of $G$. Else find a  clique separator $C$, let $\Gamma_C=\{\gamma_1,\ldots,\gamma_s\}$, $\gamma_i=G[V_i\cup C]$, be the set of connected components separated by $C$.

\item\label{al:rec} Recursively test the graphs $\gamma\in \Gamma_C$. If any one is not a path graph, then QUIT, otherwise, return a clique path tree $T_\gamma$ for each $\gamma\in\Gamma_C$. 

\item\label{al:inizialization} Compute $\Gamma_C/\!\!\sim$ and initialize $f(\gamma)=$ \NULL, for all $\gamma\in\Gamma_C/\!\!\sim$. Compute $\U_C$ and fix an order of its element, i.e., $\U_C=(u_1,\ldots,u_r)$, and set $f(u_i)=i$, for all $i\in[r]$. If a full antipodal triangle in $\U_C$ is found, then QUIT. Compute $D_i$, for all $i\in[r]$, and $D_{i,j}$, for $i<j\in[r]$.

\item\label{al:crossing}
\begin{itemize}[leftmargin=1em]
\item For all $i\in[r]$, if there exist $\gamma\in D_i$ and $u\in\U_C$ such that $\gamma\leftrightarrow u$, then $f(\gamma)=i$.

 \item For all $i<j\in[r]$,
	\begin{itemize}\itemsep0em
	\item if there exist $\gamma\in D_{i,j}$, $\gamma'\in D_i$, $\gamma''\in D_j$, such that $\gamma\leftrightarrow\gamma'$ and $\gamma\leftrightarrow\gamma''$, then QUIT,
	\item if there exist $\gamma\in D_{i,j}$ and $\gamma'\in D_j$ such that $\gamma\leftrightarrow\gamma'$, then $f(\gamma)=i$,
	\item if there exist $\gamma\in D_{i,j}$ and $\gamma'\in D_i$ such that $\gamma\leftrightarrow\gamma'$, then $f(\gamma)=j$.
	\end{itemize}

\end{itemize}

\item\label{al:antipodal} 
\begin{itemize}[leftmargin=1em]

\item For all $i\in[r]$, extend $f$ to all elements in $D_i$ so that $f(D_i)\subseteq\{i,r+1\}$ and $f(\gamma)\neq f(\gamma')$ for all $\gamma,\gamma'\in D_i$ satisfying $\gamma\not\leftrightarrow\gamma'$. If it is not possible, then QUIT.

\item  For all $i<j\in[r]$, extend $f$ to all elements in $D_{i,j}$ so that $f(D_{i,j})\subseteq\{i,j\}$ and $f(\gamma)\neq f(\gamma')$ for all $\gamma,\gamma'\in D_{i,j}$ satisfying $\gamma\not\leftrightarrow\gamma'$. If it is not possible, then QUIT.
\end{itemize}

\item\label{al:ct} Convert the coloring $f:\Gamma_C/\!\!\sim\rightarrow[r+1]$ in a clique path tree of $\Gamma_C$.
\end{enumerate}

\begin{theorem}
\label{th:correctness}
Given a graph $G$, algorithm \PAPERINO can establish whether $G$ is a path graph. If so, algorithm \PAPERINO returns a clique path tree of $G$.
\end{theorem}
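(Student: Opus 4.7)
The plan is to prove correctness by strong induction on the number of cliques of $G$, combined with a step-by-step verification that the algorithm faithfully implements the characterization of Theorem~\ref{th:coloring}.

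I would first dispose of the base cases. Step~\ref{al:chord} invokes a linear-time chordality test and aborts when $G$ is not chordal, which is necessary since every path graph is chordal. Step~\ref{al:Q} handles the case where $G$ has at most two cliques (a trivial edge or a single vertex suffices as clique path tree) and otherwise locates a clique separator; the Monma--Wei fact that an atom is a path graph iff it is chordal dispatches the no-separator sub-case. For the inductive step, assume the recursive calls in step~\ref{al:rec} correctly decide, for each $\gamma\in\Gamma_C$, whether $\gamma$ is a path graph and, if so, return a clique path tree $T_\gamma$. It then remains to show that steps~\ref{al:f=emptyset}--\ref{al:ct} construct a function $f:\Gamma_C\to[r+1]$ meeting all clauses \ref{item:col_2}--\ref{item:col_7} of Theorem~\ref{th:coloring} whenever one exists, and correctly reject otherwise; the conclusion then follows directly from Theorem~\ref{th:coloring}.

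Next I would match each remaining step to a clause of Theorem~\ref{th:coloring}. Step~\ref{al:f=emptyset} rejects whenever a full antipodal triangle is discovered, which is mandatory by Theorem~\ref{th:coloring}. Step~\ref{al:upper} fixes $\U_C=(u_1,\ldots,u_r)$ and sets $f(u_i)=i$, enforcing \ref{item:col_2}. Step~\ref{al:split} builds the partition of \eqref{dgamma}--\eqref{dmathcal}: its well-definedness follows from the cited result in \cite{ab} that $\U_C$ without a full antipodal triangle implies $\Gamma_C=\mathcal{D}^C$, a hypothesis secured by step~\ref{al:f=emptyset}. Step~\ref{al:|a|=1} performs the forced assignments of \ref{item:col_5}, and step~\ref{al:|a|=2} performs those of \ref{item:col_6}; the first bullet of step~\ref{al:|a|=2} aborts exactly when these two forced values would clash, i.e.\ when some $\gamma\in D_{i,j}$ is antipodal to witnesses in both $D_i$ and $D_j$. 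Finally, the extension steps~\ref{step:split_A_alpha,B_alpha,for|alpha|=1} and \ref{step:split_A_alpha,B_alpha,for|alpha|=2} decide $2$-colorings of each block respecting the precolored vertices, enforcing \ref{item:col_3}, \ref{item:col_4} together with the relevant part of \ref{item:col_7}.

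The central obstacle is to show that the greedy assignments in steps~\ref{al:|a|=1} and \ref{al:|a|=2} are genuinely forced, so that the algorithm rejects only when no valid $f$ exists. For \ref{al:|a|=1}, I would argue that if $\gamma\in D_i$ is antipodal to some $u\in\U_C$, then among the two admissible colours $\{i,r+1\}$ given by \ref{item:col_3} the choice $r+1$ would collide with $u$ via \ref{item:col_7} (since $f(u)=j$ for some $j\ne i$ gives $\gamma\leftrightarrow u$ across different blocks, forcing the analysis of $\Cross_C$), leaving $i$ as the unique option; an analogous case analysis, splitting on which of $D_i$ or $D_j$ contains the antipodal witness, delivers the same conclusion for \ref{al:|a|=2}. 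Once all forced values are in place, each block $D_i$ (respectively $D_{i,j}$) reduces to a $2$-list-colouring problem on its antipodality graph restricted to the available colours, which is solvable iff every connected component of that graph is bipartite and compatible with the precolouring; this is exactly what steps~\ref{step:split_A_alpha,B_alpha,for|alpha|=1} and \ref{step:split_A_alpha,B_alpha,for|alpha|=2} test. Step~\ref{al:ct} then assembles the clique path tree by attaching each subtree $T_\gamma$ to $C$ on the side dictated by $f(\gamma)$; its correctness is verified by checking that $T(\mathbf{C}_v)$ remains a path for every $v\in V(G)$, which Theorem~\ref{th:coloring} guarantees precisely when the coloring has been validly produced. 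Combining these verifications with the inductive hypothesis closes the proof.
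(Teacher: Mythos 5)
Your proposal is correct and follows essentially the same route as the paper's own (much terser) proof: induct on the recursive decomposition and check that Steps \ref{al:f=emptyset}--\ref{al:ct} implement clauses \ref{item:col_2}--\ref{item:col_7} of Theorem~\ref{th:coloring}, with the QUITs corresponding exactly to violations of those clauses. Your extra effort to re-derive why the assignments in Steps \ref{al:|a|=1} and \ref{al:|a|=2} are forced is harmless but unnecessary (and slightly off, since \ref{item:col_7} only constrains pairs within a single block $D$): those assignments are mandated verbatim by conditions \ref{item:col_5} and \ref{item:col_6}, which are part of the characterization already proved in \cite{ab}, so the algorithm need only enforce them.
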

\begin{proof}
The first three steps of algorithm \PAPERINO are implied by the first part of Theorem~\ref{th:coloring}. By following Theorem~\ref{th:coloring}, we have to check that there are no full antipodal triangle in $\U_C$ (this is performed in Step~\ref{al:inizialization}), and we have to find $f:\Gamma_C\rightarrow[r+1]$ satisfying~\ref{item:col_2},\ldots,\ref{item:col_7}, where $r=|\U_C|$. This latter part is done in Step~\ref{al:inizialization}, Step~\ref{al:crossing} and Step~\ref{al:antipodal}. In particular~\ref{item:col_2} is done in Step~\ref{al:inizialization}, \ref{item:col_5} and~\ref{item:col_6} are achieved in Step~\ref{al:crossing}, and~\ref{item:col_3}, \ref{item:col_4} and~\ref{item:col_7} are reached in Step~\ref{al:antipodal}. Note that the first condition in the second case of Step~\ref{al:crossing} is indirectly present in Theorem~\ref{th:coloring}: if it happens, then we cannot satisfy condition~\ref{item:col_6} (moreover, $\gamma,\gamma',\gamma''$ would form a full antipodal triangle). Finally, Step~\ref{al:ct} completes the recursion started in Step~\ref{al:rec} by building the clique path tree on $\Gamma_C$.
\end{proof}

\subsection{Implementation details and time complexity}
\label{section:implementation_details_&_running}

In this section we analyze all steps of algorithm \PAPERINO. We want to explain them in details and precise the computational complexity of the algorithm. Some of these steps are already discussed in~\cite{schaffer}, anyway, we describe them in order to have a complete treatment. 

\subsubsection{Step~\ref{al:chord} and Step~\ref{al:Q}}
We can recognize chordal graphs in $O(m+n)$ time by using either an algorithm due to~\cite{sc18}, or an algorithm due to~\cite{sc19}. Both recognition algorithms can be extended to an algorithm that also produces a clique tree in $O(m+n)$ time~\cite{sc14}. In particular, we can list all cliques in $G$. It holds that a clique in $\mathbf{C}$ is a clique separator if and only if it is not a leaf of the clique tree.

\subsubsection{Step~\ref{al:rec}}

This step can be done by calling recursively algorithm \PAPERINO for all $\gamma\in\Gamma_C$. Obviously, Step~\ref{al:chord} has to be done only for $G$, indeed, the property to be chordal is inherited by subgraphs.

From now on, we are interested exclusively in the recursive part of algorithm \PAPERINO, i.e., from Step~\ref{al:inizialization} to Step~\ref{al:ct}. Thus we assume that $G$ is a chordal graph that is not an atom and $G$ is separated by a clique separator $C$. Let $\Gamma_C=\{\gamma_1,\ldots,\gamma_s\}$ be the set of connected components and we assume that all elements in $\Gamma_C$ are path graphs. It holds that $s\geq2$ because $C$ is a clique separator.

\subsubsection{Step~\ref{al:inizialization}}\label{sub:5,6,7}

We have to compute $\Gamma_C/\!\!\sim$, this problem is already solved in~\cite{schaffer}. Thus we first provide some definitions and results in~\cite{schaffer}. 

For any $\gamma\in\Gamma_C$, let $T_\gamma$ be the clique path tree of $\gamma$. Let $n_\gamma$ be the unique neighbour of $C$ in $T_\gamma$ (its uniqueness is proved in~\cite{mew}, in particular, it is proved that $C$ is a leaf of $T_\gamma$). Moreover, let $W(\gamma)=V(n_\gamma)\cap V(C)$.
\begin{definition}[\cite{schaffer}]\label{def:f}
Let $\gamma\in\Gamma_C$ and $v\in V(C)$. We define $F(\gamma,v)$ as the node of $T_\gamma$ representing the clique containing $v$ that is furthest from $C$. We observe that if $v\not\in W(\gamma)$, then $F(\gamma,v)=\emptyset$.
\end{definition}

By using Definition~\ref{def:f}, one can obtain the following lemma.

\begin{lemma}[\cite{schaffer}]
\label{lemma:geq}
Let $\gamma,\gamma'\in\Gamma_C$. It holds that
\begin{equation*}
\gamma'\leq\gamma \Leftrightarrow \gamma\cap\gamma'\neq\emptyset \text{ and } F(\gamma,v)=F(\gamma,v') \text{ for all } v,v'\in W(\gamma').
\end{equation*}
Moreover, computing if $\gamma'\leq\gamma$ and/or $\gamma'\leftrightarrow\gamma$ costs $\min(|W(\gamma)|,|W(\gamma')|)$.
\end{lemma}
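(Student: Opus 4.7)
The plan is to exploit the tree-structural content hidden in the definitions of $F$ and $W$. Since $\gamma$ is a path graph with clique path tree $T_\gamma$ and $C$ is a leaf of $T_\gamma$ with unique neighbour $n_\gamma$, for any fixed vertex $v$ the cliques of $\gamma$ containing $v$ form a subpath of $T_\gamma$; if $v\notin W(\gamma)$ this subpath is just $\{C\}$, whereas if $v\in W(\gamma)$ it starts at $n_\gamma$ and terminates at $F(\gamma,v)$. Two further facts I would record up front: the relevant cliques of $\gamma$ form the subtree of $T_\gamma$ rooted at $n_\gamma$, and any relevant clique $K$ of $\gamma$ satisfies $K\cap C\subseteq W(\gamma)$ (because the $v$-subpath for $v\in K\cap C$ must pass through $n_\gamma$).

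For the forward direction I would fix arbitrary $v,v'\in W(\gamma')$, set $K=F(\gamma,v)$, and feed this particular $K$ into the definition of dominance. The ``disjoint-trace'' alternative is immediately ruled out because $n_{\gamma'}$ is a relevant clique of $\gamma'$ and $v\in K\cap n_{\gamma'}\cap C$. Therefore the other alternative must hold, giving $W(\gamma')=n_{\gamma'}\cap C\subseteq K\cap C$, and in particular $v'\in K$. Hence the $v'$-subpath in $T_\gamma$ reaches at least $K$, so $F(\gamma,v')$ is no closer to $C$ than $F(\gamma,v)$; the symmetric argument starting from $F(\gamma,v')$ forces the two to coincide. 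The attachedness condition $\gamma\cap\gamma'\neq\emptyset$ is produced along the way, since $W(\gamma)\cap W(\gamma')$ contains $v$.

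For the converse I would name $K^*$ the common value of $F(\gamma,v)$ for $v\in W(\gamma')$, which is a relevant clique of $\gamma$ with $W(\gamma')\subseteq V(K^*)$, and then verify the dominance definition clique by clique. Given any relevant clique $K$ of $\gamma$, I split on whether $K$ lies on the $T_\gamma$-path from $n_\gamma$ to $K^*$: if yes, then every $v\in W(\gamma')$ has its clique-subpath passing through $K$, so $W(\gamma')\subseteq V(K)$ and every relevant $K'$ of $\gamma'$ satisfies $K'\cap C\subseteq W(\gamma')\subseteq K\cap C$ (alternative~(a)); if no, then a putative $v\in K\cap K'\cap C$ would force $v\in W(\gamma')$ and thus $F(\gamma,v)=K^*$, which in turn puts $K$ on the path from $C$ to $K^*$, contradicting the case assumption, so alternative~(b) holds vacuously. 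This tree-branching case analysis is, in my opinion, the main obstacle: making sure the two alternatives of the dominance definition correspond exactly to ``ancestor of $K^*$'' versus ``not an ancestor of $K^*$''.

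For the complexity I would precompute, once and for all, the table $F(\gamma,\cdot)$ for every $\gamma\in\Gamma_C$ by a single rooted sweep of $T_\gamma$ in time linear in the size of $T_\gamma$; afterwards each query $F(\gamma,v)$ is answered in $O(1)$. To test $\gamma'\leq\gamma$ I iterate through the vertices of $W(\gamma')$, retrieve $F(\gamma,v)$, and check that all retrieved values coincide, at total cost $O(|W(\gamma')|)$. Since $\leq$ is not symmetric one may also want to test $\gamma\leq\gamma'$ by iterating through $W(\gamma)$, but whichever side we look at we can afford to visit only the smaller of the two sets, giving the claimed $\min(|W(\gamma)|,|W(\gamma')|)$ bound. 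Antipodality is detected by the very same sweep: if two vertices $v,v'\in W(\gamma')$ yield distinct nonempty values $F(\gamma,v)\neq F(\gamma,v')$, then the corresponding relevant cliques of $\gamma$ have $C$-traces that are incomparable and meet at $v$ (respectively $v'$), i.e., $\gamma\leftrightarrow\gamma'$.
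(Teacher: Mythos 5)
The paper itself gives no proof of this lemma---it is imported verbatim from Sch\"{a}ffer---so there is no in-text argument to measure yours against; judged on its own, your proof of the displayed equivalence takes the natural route and is essentially sound, but it has one gap to close. In the forward direction you set $K=F(\gamma,v)$ and immediately use $v\in K$, which presupposes $F(\gamma,v)\neq\emptyset$, i.e.\ $v\in W(\gamma)$; that containment $W(\gamma')\subseteq W(\gamma)$ is not available at that point and must be derived first, e.g.\ by feeding the outer clique $K=n_\gamma$ into the dominance condition: attachedness rules out the disjoint alternative for $n_\gamma$, giving $W(\gamma')=n_{\gamma'}\cap C\subseteq n_\gamma\cap C=W(\gamma)$, and only then is $F(\gamma,v)$ a genuine clique containing $v$. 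The backward direction (case split on whether $K$ lies on the $T_\gamma$-path from $C$ to $K^*$) is correct.

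The real error is in your treatment of antipodality. You claim that two vertices $v,v'\in W(\gamma')$ with distinct nonempty values $F(\gamma,v)\neq F(\gamma,v')$ certify $\gamma\leftrightarrow\gamma'$ because ``the corresponding relevant cliques of $\gamma$ have incomparable $C$-traces.'' This fails on two counts. First, antipodality requires a relevant clique of $\gamma$ \emph{and} one of $\gamma'$ with intersecting, inclusion-wise incomparable $C$-traces; two cliques of $\gamma$ alone witness nothing. Second, distinct $F$-values need not have incomparable traces: take $T_\gamma$ to be the path on cliques $C$, $n_\gamma$, $K_1$ with $n_\gamma\cap C=\{a,b\}$ and $K_1\cap C=\{a\}$, and take $\gamma'$ with the single relevant clique $n_{\gamma'}$, $n_{\gamma'}\cap C=\{a,b\}$. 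Then $F(\gamma,a)=K_1\neq n_\gamma=F(\gamma,b)$ with $a,b\in W(\gamma')$, yet every pair of relevant cliques has comparable traces and in fact $\gamma\leq\gamma'$; the pair is comparable, not antipodal. Distinct values only give $\gamma'\not\leq\gamma$; to conclude $\gamma'\leftrightarrow\gamma$ you must additionally rule out $\gamma\leq\gamma'$ and invoke the trichotomy on attached pairs. The same omission undermines your $\min(|W(\gamma)|,|W(\gamma')|)$ bound: scanning the smaller $W$-set decides dominance in one direction only, and the opposite direction comes for free only once you observe that $\gamma\leq\gamma'$ forces $W(\gamma)\subseteq W(\gamma')$, so the subgraph with the strictly larger $W$-set can never be the dominated one. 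That observation is the missing piece of both the antipodality test and the complexity claim.
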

\begin{remark}[\cite{schaffer}]\label{remark:f}
To compute $F(\gamma,v)$ we do one breadth-first traversal of $T_\gamma$ starting at $n_\gamma$. Each time we visit a new node $C'$, for each $v\in V(C')$, if $v\in V(C)$, we update $F(\gamma, v)$. This costs constant time for every pair $(C',v)$ such that $C'$ is a clique in $\gamma$ and $v\in V(C')$. There are at most $m+n$ such pairs.
\end{remark}

Now we can explain how to compute $\Gamma_C/\!\!\sim$. First, we sort all $\gamma\in\Gamma_C$ so that the $\gamma_i$ precedes $\gamma_j$ if $|W(\gamma_i)|\geq|W(\gamma_j)|$: we compute $W(\gamma)$ for all $\gamma\in\Gamma_C$ (it costs $|W(\gamma)|$ for every $\gamma$), then the sorting can be executed in $O(s)$ time by using bucket sort (we remember that $s\leq n$).

Now, let $\gamma,\gamma',\gamma''\in\Gamma_C$ satisfy $|W(\gamma)|=|W(\gamma')|=|W(\gamma'')|$ and $v\in W(\gamma)\cap W(\gamma') \cap W(\gamma'')$, for any $v\in C$. By Lemma~\ref{lemma:geq}, we check if $\gamma\sim\gamma'$ in $O(|W(\gamma)|)$ time. If $\gamma\not\sim\gamma'$, then either ($\gamma''\sim\gamma$ or $\gamma''\sim\gamma'$) or $\{\gamma,\gamma',\gamma''\}$ is a full antipodal triangle (this follows from definition of $\leftrightarrow$ and $W$). Hence we can compute $\Gamma_C/\!\!\sim$ in $O(\sum_{\gamma\in\Gamma_C}|W(\gamma)|)$ time, indeed every element in $\Gamma_C$ need to be checked with at most two other elements in $\Gamma_C$.

To argue with the second part of Step~\ref{al:inizialization}, let $u(v)=\{u\in \U_C \,|\, v\in W(u)\}$. Note that $|u(v)|\leq2$ for all $v\in V(C)$, otherwise $u(v)\supseteq\{u_i,u_j,u_k\}$ for some $v\in V(C)$, thus $\{u_i,u_j,u_k\}$ is a full antipodal triangle. Hence, by  Lemma~\ref{lemma:geq}, $\gamma\in \U_C$ if and only if there not exists $u_i\in \U_C$ such that $F(u_i,v)=F(u_i,v')\neq\emptyset$ for all $v,v'\in\ W(\gamma)$. Hence, to establish if $\gamma$ is in $\U_C$ it is sufficient to look at $u(v)$ for all $v\in W(\gamma)$.
By using a similar argument, we can compute $D$, for $D\in \mathcal{D}$. Hence Step~\ref{al:inizialization} can be performed in $O(\sum_{\gamma\in\Gamma_C}|W(\gamma)|)$ time.

\subsubsection{Step~\ref{al:crossing}}

\begin{lemma}
Step~\ref{al:crossing} has $O(\sum_{\gamma\in\Gamma_C}|W(\gamma)|)$ time complexity.
\end{lemma}
\begin{proof}
For the first case of Step~\ref{al:crossing}, let $\gamma\in D_i$. It suffices to check if there exists $u\in U\setminus\{u_i\}$ neighbor of $v$, for some $v\in W(\gamma)$; if so, then $u\leftrightarrow\gamma$ because of the definitions of $\U_C$ and $D_i$. Thus this check costs at most $O(|W(\gamma)|)$ time.

For the second case of Step~\ref{al:crossing}, let $\gamma\in D_{i,j}$, we have to check if there exists $\eta\in D_i$ such that $\gamma\leftrightarrow\eta$, the $D_j$'s case is symmetric. Let $v\in W(\gamma)$, note that if $\eta,\eta'\in D_i$ are neighboring of $v$ and $\eta\leftrightarrow\eta'$, then $\eta,\eta'$ and $u_j$ form a full antipodal triangle. So if $\eta\leftrightarrow\eta'$, then we fall in a QUIT in Step~\ref{al:antipodal} for $D_i$; indeed $f(\eta)=f(\eta')=i$ because of Step~\ref{al:crossing} and thus $f$ is not a 2-coloring. Hence without loss of generality with respect to the output of algorithm \PAPERINO, we can assume that $\eta\leq\eta'$ or $\eta'\leq\eta$ for all couples $\eta,\eta'\in D_i$ neighboring of $v$, for all $v\in W(u_i)\cap W(u_j)$. 

For any $v\in W(u_i)$, we define $m_i(v)$ the element in $D_i$ with minimal $W$ among all neighboring of $v$. We can compute $m_i(v)$, for all $v\in W(u_i)$, in total $O(\sum_{\eta\in D_i}|W(\eta)|)$ time by ordering elements in $D_i$ w.r.t. $W$ with bucket sort, and then visiting them in this order and updating $m_i(v)$ for all $v\in W(\eta)$ (this can be done in Step~\ref{al:crossing} in the same time complexity). 

Now let $\bar{v}\in W(\gamma)$ be so that $|W(m_i(\bar{v}))| \leq |W(m_i(v))|$ for all $v\in W(\gamma)$, we check if $\gamma\leq m_i(\bar{v})$ or $\gamma\leftrightarrow m_i(\bar{v})$. The first case implies $m_i(\bar{v})$ is a neighboring of $v'$ for all $v'\in W(\gamma)$, and thus $\gamma\leq\eta'$ for all $\eta'\in D_i$ neighboring of $v$, implying that there are not antipodal elements to $\gamma$ in $D_i$. If the second case applies, then we finished the checks for $\gamma$. In both cases we spend $O(|W(\gamma)|)$ time, and the claim follows.
\end{proof}

\subsubsection{Step~\ref{al:antipodal}}

To our goals, we introduce $f_i(\cdot)$ as $f(\cdot)$ after Step $i$ requiring that in Step $i$ algorithm \PAPERINO does not terminate in QUIT. We will use $f_5$ and $f_9$.

\begin{lemma}\label{lemma:NULL_NULL}
The following statements hold:
\begin{enumerate}
\item let $i\in[r]$ and let $\gamma,\gamma'\in D_i$ be such that $\gamma'\leq\gamma$. If $f_5(\gamma)=$ \NULL, then $f_5(\gamma')=$ \NULL ,
\item let $i<j\in[r]$ and let $\gamma,\gamma'\in D_{i,j}$ be such that $\gamma'\leq\gamma$. If $f_5(\gamma)=$ \NULL, then $f_5(\gamma')=$ \NULL.
\end{enumerate}
\end{lemma}
\begin{proof}
-- Assume by contradiction that $f_5(\gamma')\neq$ \NULL \ and $f_5(\gamma)=$ \NULL. By Step~\ref{al:crossing} there exists $i\neq k\in[r]$ such that $u_k\leftrightarrow\gamma'$, and thus $u_k,\gamma'$ are neighboring of $v$, for some $v\in C$. Thus $\gamma'$ is a neighboring of $v$ by transitivity of $\leq$. Hence it holds either $\gamma\leq u_k$, or $u_k\leq\gamma$, or $u_k\leftrightarrow\gamma$. Now, $u_k\not\leq\gamma$ because $u_k\in \U_C$, $\gamma\not\leq u_k$ because $\gamma\in D_i$ and $k\neq i$. Thus $u_k\leftrightarrow\gamma$ and Step~\ref{al:crossing} implies $f_5(\gamma)=i=f_5(\gamma')$, absurdum.

\noindent
-- Assume by contradiction that $f_5(\gamma')\neq$ \NULL  \ and $f_5(\gamma)=$ \NULL. W.l.o.g., let us assume that $f_5(\gamma')=i$. By Step~\ref{al:crossing}, there exists $\eta\in D_j$ such that $\eta\leftrightarrow\gamma'$. Thus $\eta\bowtie\gamma$ by transitivity of $\leq$. Hence, as before, it holds either $\gamma\leq\eta$, or $\eta\leq\gamma$, or $\eta\leftrightarrow\gamma$. Now, $\eta\not\leq\gamma$, otherwise $\eta\in D_{i,j}$, $\gamma\not\leq\eta$, otherwise $\gamma\leq\eta$ by transitivity of $\leq$. Thus $\eta\leftrightarrow\gamma$ and Step~\ref{al:crossing} implies $f_5(\gamma)=i=f_5(\gamma')$, absurdum.
\end{proof}


\begin{lemma}\label{lemma:cost_step_8+10}
Step~\ref{al:antipodal} has $O(\sum_{\gamma\in\Gamma_C}|W(\gamma)|)$ time complexity.
\end{lemma}
\begin{proof}
It suffices to prove that Step~\ref{al:antipodal} can be executed in $O(\sum_{\gamma\in D_{i,j}}|W(\gamma)|)$ for $D=D_{i,j}$ (second case). Indeed, the same procedure can be applied for $D=D_i$ (first case).

Let $\Colored$ be the set of elements in $D_{i,j}$ already colored before Step~\ref{al:antipodal}, i.e., $\Colored=\{\gamma\in D_{i,j} \,|\, f_5\neq\NULL\}$. We first check that there are not $\gamma,\gamma'\in\Colored$ such that $\gamma\leftrightarrow\gamma'$ and $f(\gamma)=f(\gamma')$.

We sort the elements in $D_{i,j}=(\gamma^1,\ldots,\gamma^{|D_{i,j}|})$ so that $|W(\gamma^k)|\geq |W(\gamma^{k+1})|$, for $k\in[|D_{i,j}|-1]$ (this can be done in Step~\ref{al:inizialization} without changing its time complexity). We visit the elements in $\Colored$ by following this sorting. 

For any $v\in W(D_{i,j})$ we define $\ell_i(v)$ as the lowest $\gamma\in D_{i,j}$ w.r.t $\leq$ among all visited element satisfying $f_5(\gamma)=i$. Similarly, we define $\ell_j(v)$. We initialize $\ell_i(v)=\ell_j(v)=\emptyset$ for all $v\in W(D_{i,j})$.

Let $\gamma\in\Colored$, and w.l.o.g., let us assume that $f_5(\gamma)=i$. Then $\gamma$ is not antipodal to previous visited elements colored with $i$ if $\ell_i(u)=\ell_i(v)$ for all $u,v\in W(\gamma)$. Indeed, either $\ell_i(v)=\emptyset$ for all $v\in W(\gamma)$ and thus $\gamma\not\bowtie\gamma'$ for all visited $\gamma'$ colored with $i$, or $\ell_i(v)=\gamma'$ for all $v\in W(\gamma)$ and hence there cannot exist $\gamma''$ already visited satisfying $\gamma''\leftrightarrow\gamma$ because it would imply $\ell_i(v)=\gamma''$ for some $v\in\gamma''$.

Now we deal with blank elements. We define $\Blank$ as the set of all elements in $D_{i,j}$ that do not yet have an assigned color. We say that $\gamma\in\Blank$ is \emph{solved} if there exists $u,v\in W(\gamma)$ such that $\ell_i(u)\neq\ell_i(v)$ or $\ell_j(u)\neq\ell_j(v)$. Note that if $\gamma$ is solved, then either we can set (uniquely) its color or we have to QUIT; both cases are implied by Lemma~\ref{lemma:NULL_NULL} and by the above reasoning done for $\Colored$.

For all $v\in W(D_{i,j})$ we write $\gamma\in M_v$ if $v\in W(\gamma)$, $\gamma\in\Blank$ and $|W(\gamma)|$ is maximal among all $\gamma'\in\Blank$ satisfying $v\in W(\gamma')$. Note that $|M_v|\leq2$ for all $v\in W(D_{i,j})$, otherwise there would be a full antipodal triangle.

Now we describe how to set the color of elements in $\Blank$. 
We search (in any order) $v\in W(D_{i,j})$ such that $M_v$ is solved. We observe that if we visit $M_v$ before $M_u$, $M_v$ is not solved, $M_u$ is solved and $W(M_u)\cap W(M_v)\neq\emptyset$, then $M_v$ becomes solved after the (uniquely) choose of the color of $M_u$. Moreover, if for all $v\in W(D_{i,j})$ $M_v$ is not solved, then for all $\gamma\in\Blank$ and $\gamma'\in D_{i,j}\setminus\Blank$ it holds $\gamma\not\leftrightarrow\gamma'$ because of definition of $M_i(v)$'s. Thus the 2-coloring of $D_{i,j}$ required in Step~\ref{al:antipodal}	 (if it exists) it is not unique, and thus we can choose $v\in W(D_{i,j})$ and the color of $M_v$ arbitrarily.

In this way we visit in constant time every $\gamma\in D_{i,j}$ at most $O(|W(\gamma)|)$ times, and we assign its color in $O(|W(\gamma)|)$ time. Finally, we can update $\ell_i(v)$'s and $\ell_j(v)$'s every time we visit an element $\gamma$ in $\Colored$ or we set the color of $\gamma'$ in $\Blank$ in $O(|W(\gamma)|)$ and $O(|W(\gamma')|)$ time, respectively. We update $M_v$'s in $O(1)$ time if we represent $M_v$ as a vector; we build these vectors varying $v\in W(D_{i,j})$ in  $O(\sum_{\gamma\in D_{i,j}}|W(\gamma)|)$ time. The claim follows.
\end{proof}

\subsubsection{Step~\ref{al:ct}}

Let $C$ be the clique separator of $G$ fixed at Step~\ref{al:Q}. In~\cite{mew} (proof of Proposition 9) it is shown how to build a clique path tree on the cliques in $\Gamma_C$ starting from a coloring satisfying~\ref{com:mw_i} and~\ref{com:mw_ii} in $O(|\Gamma_C|)$ time. Finally, by Theorem~\ref{th:correctness}, $f_9$ satisfies~\ref{com:mw_i} and~\ref{com:mw_ii}, and Lemma~\ref{lemma:tilde} implies that Step~\ref{al:ct} has $O(|\Gamma_C|)$ time complexity.

\subsubsection{Time complexity}\label{costo}

In Theorem~\ref{th:main_PG} we show that the algorithm \PAPERINO has $O(p(m+n))$ time complexity by summarizing the results of previous subsections and by using the following lemma proved in~\cite{schaffer}. 

\begin{lemma}[\cite{schaffer}]\label{lemma:m+n}
For every clique separator $C$ of $G$ it holds $\sum_{\gamma\in\Gamma_C} (|W(\gamma)|)\leq m+n.$
\end{lemma}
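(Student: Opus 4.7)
The plan is to prove the stronger inequality $\sum_{\gamma\in\Gamma_C}|W(\gamma)|\leq m$ by exhibiting an explicit injection from the multiset $\{(v,\gamma) : \gamma\in\Gamma_C,\ v\in W(\gamma)\}$ into the edge set $E(G)$. Since the size of this multiset is exactly $\sum_{\gamma}|W(\gamma)|$, any such injection will immediately yield the bound $\leq m\leq m+n$.

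First I would unpack the meaning of $v\in W(\gamma)$. Recall that $W(\gamma)=V(n_\gamma)\cap V(C)$ where $n_\gamma$ is the (unique) neighbor of $C$ in the clique path tree $T_\gamma$ of $\gamma$. In particular $C$ and $n_\gamma$ are two distinct maximal cliques of $\gamma$, and by maximality neither is contained in the other; consequently $n_\gamma\setminus C\neq\emptyset$. So, given any pair $(v,\gamma)$ with $v\in W(\gamma)$, I can pick a vertex $u(v,\gamma)\in n_\gamma\setminus C$, which necessarily lies in $V_i$ when $\gamma=\gamma_i$. Because $v$ and $u(v,\gamma)$ both belong to the clique $n_\gamma$, the pair $\{v,u(v,\gamma)\}$ is an edge of $G$.

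Next I would verify that the assignment $(v,\gamma)\mapsto\{v,u(v,\gamma)\}$ is injective. Take two distinct pairs $(v,\gamma_i)$ and $(v',\gamma_j)$. If $i\neq j$, the non-$C$ endpoints $u(v,\gamma_i)\in V_i$ and $u(v',\gamma_j)\in V_j$ lie in disjoint sets (by the definition of the components $V_1,\dots,V_s$ separated by $C$), so the two edges are different. If instead $i=j$ but $v\neq v'$, then the $C$-endpoints of the two edges already differ, and again the edges are distinct. Summing over all pairs then gives $\sum_{\gamma\in\Gamma_C}|W(\gamma)|\leq |E|=m\leq m+n$.

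The argument is essentially a pigeonhole/charging scheme and the only subtle point is checking that $n_\gamma\setminus C$ is nonempty, which I would justify directly from the facts that $n_\gamma$ is defined as a neighbor of $C$ in $T_\gamma$ (hence $n_\gamma\neq C$) and that both are maximal cliques of $\gamma$. Beyond that, no further machinery from earlier in the paper is needed.
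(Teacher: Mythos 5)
Your argument is correct, and it is self-contained where the paper is not: the paper states this lemma without proof, citing Sch\"{a}ffer, and the natural route behind the stated bound is to observe that $W(\gamma)\subseteq n_\gamma$, that the cliques $n_{\gamma_1},\ldots,n_{\gamma_s}$ are pairwise distinct maximal cliques of $G$ (each meets its own $V_i$), and then to invoke the standard fact that $\sum_{K\in\mathbf{C}}|K|\leq m+n$ for a chordal graph --- which is where the additive $n$ comes from. Your charging scheme avoids that clique-size fact entirely: each pair $(v,\gamma)$ with $v\in W(\gamma)$ is sent to an edge $\{v,u\}$ with $u\in n_\gamma\setminus C\subseteq V_i$, injectivity follows because the $V_i$ are pairwise disjoint and disjoint from $C$ (so an edge in the image determines its component $\gamma_i$ from its non-$C$ endpoint and the vertex $v$ from its $C$-endpoint), and the one nontrivial verification --- that $n_\gamma\setminus C\neq\emptyset$ --- follows as you say from $n_\gamma$ being a maximal clique of $\gamma$ distinct from the complete subgraph $C$. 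This yields the slightly stronger bound $\sum_{\gamma\in\Gamma_C}|W(\gamma)|\leq m$, which of course implies the stated $m+n$. The only cosmetic caveat is that your notation $u(v,\gamma)$ suggests a dependence on $v$ that is never used; a single representative $u(\gamma)\in n_\gamma\setminus C$ per component would do, since injectivity within a fixed $\gamma$ is carried entirely by the $C$-endpoint.
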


\begin{theorem}\label{th:main_PG}
Algorithm \PAPERINO can establish whether a graph $G$ is a path graph in $O(p(m+n))$ time, where $p$ is the number of cliques in $G$.
\end{theorem}
\begin{proof}
By Theorem~\ref{th:correctness}, it suffices to prove that the algorithm \PAPERINO can be executed in $O(p(m+n))$ time. Step 1 and Step 2 have $O(m+n)$ time complexity, while Step~\ref{al:rec} has $p$ times the complexity of steps~\ref{al:inizialization}, \ref{al:crossing}, \ref{al:antipodal} and \ref{al:ct}. Let $C$ be the clique separator in a recursive call of Step~\ref{al:rec}. Steps \ref{al:inizialization}, \ref{al:crossing} and \ref{al:antipodal} can be executed in $O(\sum_{\gamma\in\Gamma_C}|W(\gamma)|)$ time and Step~\ref{al:ct} has $O(|\Gamma_C|)$ time complexity. By Lemma~\ref{lemma:m+n} and being $|\Gamma_C|\leq n$ for every clique separator $C$, the claim follows.
\end{proof}

\section{Recognition algorithm for directed path graphs}\label{section:algorithm_DPG}

In this section we present algorithm \PAPERINObis that is able to recognize directed path graphs. It is based on Theorem~\ref{mew2} and on algorithm \PAPERINO, both algorithms have the same time complexity.

Thanks to Theorem~\ref{mew2}, given a clique separator $C$, we have to check whether the antipodality graph $A_C$ is 2-colorable. If so, then there are no antipodal triangle in $\Gamma_C$. Moreover, the absence of an antipodal triangle implies the absence of full antipodal triangle. Thus, if there are no full antipodal triangles in $\Gamma_C$, then the rigid structure described in Lemma~\ref{lemma:elenco_trivial_2} still holds. Consequently, we obtain the following characterization by noting that the coloring in Theorem~\ref{th:coloring} is a proper vertex coloring of $A_C$, thus it suffices to reduce the color set from $[r+1]$ to $\{0,1\}$ and modify accordingly all the conditions.


\begin{theorem}\label{th:coloring_DPG}
A chordal graph $G$ is a directed path graph if and only if $G$ is an atom or for a clique separator $C$ each graph $\gamma\in \Gamma_C$ is a directed path graph, $\U_C=(u_1,u_2,\ldots,u_r)$ has no antipodal triangle and there exists $f:\Gamma_C\rightarrow \{0,1\}$ such that:
\renewcommand{\theenumi}{\thetheorem.\arabic{enumi}}
\begin{enumerate}
\item\label{item:2-color_no_triangle} for all $u,u'\in \U_C$, if $u\leftrightarrow u'$, then $f(u)\neq f(u')$, 
\item for all $i\in[r]$, for all $\gamma\in D_i$ if $\exists u\in \U_C$ such that $\gamma\leftrightarrow u$, then $f(\gamma)=f(u_i)$,
\item for all $i<j\in[r]$, for all $\gamma\in D_{i,j}$ such that $\exists\gamma'\in D_{k}$, for $k\in\{i,j\}$, satisfying $\gamma\leftrightarrow\gamma'$, then $f(\gamma)=\{0,1\}\setminus f(u_k)$,
\item for all $D\in\mathcal{D}^C$, for all $\gamma,\gamma\in D$ such that $\gamma\leftrightarrow\gamma'$, $f(\gamma)\neq f(\gamma')$.
\end{enumerate}
\end{theorem}

Theorem~\ref{th:coloring_DPG} implies algorithm \PAPERINObis. For a proof of its correctness, we redirect the reader to Theorem~\ref{th:correctness}'s proof, by using Theorem~\ref{th:coloring_DPG} in place of Theorem~\ref{th:coloring}.

\mybreak
\textbf{\PAPERINObis}\\
\textbf{Input:} a graph $G$\\
\textbf{Output:} if $G$ is a directed path graph, then return a directed clique path tree of $G$; else QUIT
\renewcommand{\theenumi}{\thetheorem.\arabic{enumi}}
\begin{enumerate}[label=\arabic*.D, ref=\arabic*.D]\itemsep0em
\item Test if $G$ is chordal. If not, then QUIT.

\item If $G$ has at most two cliques, then return a directed clique path tree of $G$. Else find a clique separator $C$, let $\Gamma_C=\{\gamma_1,\ldots,\gamma_s\}$, $\gamma_i=G[V_i\cup C]$, be the set of connected components separated by $C$.

\item\label{step:2} Recursively test the graphs $\gamma\in \Gamma_C$. If any one is not a directed  path graph, then QUIT, otherwise, return a directed  clique path tree $T_\gamma$ for each $\gamma\in\Gamma_C$. 

\item\label{step:initialization_D} Compute $\Gamma_C/\!\!\sim$ and initialize $f(\gamma)=$ \NULL, for all $\gamma\in\Gamma_C/\!\!\sim$. Function $f$ has values in $\{0,1\}$. Compute $\U_C$ and fix an order of its element, i.e., $\U_C=(u_1,\ldots,u_r)$. Assign values of $f$ to elements in $\U_C$ so that antipodal elements have different color. If it is not possible, then QUIT. Compute $D_i$, for all $i\in[r]$, and $D_{i,j}$, for $i<j\in[r]$.

\item For all $i\in[r]$, if there exist $\gamma\in D_i$ and $u\in \U_C$ such that $\gamma\leftrightarrow u$, then $f(\gamma)\neq f(u)$.

\item For all $i\in[r]$, extend $f$ to all elements in $D_i$ so that $f(\gamma)\neq f(\gamma')$ for all $\gamma,\gamma'\in D_i$ satisfying $\gamma\not\leftrightarrow\gamma'$. If it is not possible, then QUIT.

\item For all $i<j\in[r]$,
	\begin{itemize}\itemsep0em
	\item if there exist $\gamma\in D_{i,j}$, $\gamma'\in D_i$, $\gamma''\in D_j$, such that $\gamma\leftrightarrow\gamma'$ and $\gamma\leftrightarrow\gamma''$, then QUIT,
	\item if there exist $\gamma\in D_{i,j}$ and $\gamma'\in D_j$ such that $\gamma\leftrightarrow\gamma'$, then $f(\gamma)=f(u_i)$,
	\item if there exist $\gamma\in D_{i,j}$ and $\gamma'\in D_i$ such that $\gamma\leftrightarrow\gamma'$, then $f(\gamma)=f(u_j)$.
	\end{itemize}

\item For all $i<j\in[r]$, extend $f$ to all elements in $D_{i,j}$ so that $f(\gamma)\neq f(\gamma')$ for all $\gamma,\gamma'\in D_{i,j}$ satisfying $\gamma\not\leftrightarrow\gamma'$. If it is not possible, then QUIT.

\item\label{step:11} Convert the coloring $f:\Gamma_C/\!\!\sim\rightarrow\{0,1\}$ in a directed  clique path tree of $\Gamma_C$.
\end{enumerate}

Any step of algorithm \PAPERINObis has the same time complexity of the corresponding step of algorithm \PAPERINO, also implementation details are similar. 
Consequently, the following theorem applies.

\begin{theorem}\label{th:main_DPG}
Algorithm \PAPERINObis can establish whether a graph $G$ is a directed path graph in $O(p(m+n))$ time, where $p$ is the number of cliques in $G$.
\end{theorem}

\section{Conclusions}

We presented the first recognition algorithm for both path graphs and directed path graphs. Both graph classes are characterized very similarly in~\cite{mew}, and we extended the simpler characterization of path graphs in~\cite{ab} to include directed path graphs as well; this result can be of interest itself. Thus, now these two graph classes can be recognized in the same way both theoretically and algorithmically.

On the side of path graphs, we believe that, compared to algorithms in~\cite{chaplick,schaffer}, our algorithm is simpler for several reasons: the overall treatment is shorter, the algorithm does not require complex data structures, its correctness is a consequence of the characterization in~\cite{ab}, and there are fewer implementation details to achieve the same computational complexity as in~\cite{chaplick,schaffer}.

On the side of directed path graphs, prior to this paper, it was necessary to implement two algorithms to recognize them: a recognition algorithm for path graphs as in~\cite{chaplick,schaffer}, and the algorithm in~\cite{chaplick-gutierrez} that in linear time is able to determining whether a path graph is also a directed path graph. Our algorithm directly recognizes directed path graphs in the same time complexity, and the simplification is clear.

\section*{Acknowledgments}
This work is partly funded by the HORIZON Research and Innovation Action 101135576 INTEND
``Intent-based data operation in the computing continuum''.


The author sincerely thanks the anonymous reviewers for their insightful comments and constructive feedback, which have significantly contributed to improving the quality and clarity of this manuscript.



\bibliographystyle{abbrvnat}
\bibliography{biblio.bib}

\begin{thebibliography}{23}
\providecommand{\natexlab}[1]{#1}
\providecommand{\url}[1]{\texttt{#1}}
\expandafter\ifx\csname urlstyle\endcsname\relax
  \providecommand{\doi}[1]{doi: #1}\else
  \providecommand{\doi}{doi: \begingroup \urlstyle{rm}\Url}\fi

\bibitem[Apollonio and Balzotti(2023)]{ab}
N.~Apollonio and L.~Balzotti.
\newblock Two new characterizations of path graphs.
\newblock \emph{Discrete Mathematics}, 346\penalty0 (12):\penalty0 113596,
  2023.
\newblock ISSN 0012-365X.
\newblock \doi{https://doi.org/10.1016/j.disc.2023.113596}.
\newblock URL
  \url{https://www.sciencedirect.com/science/article/pii/S0012365X23002820}.

\bibitem[Booth and Lueker(1976)]{booth-lueker}
K.~S. Booth and G.~S. Lueker.
\newblock Testing for the {C}onsecutive {O}nes {P}roperty, {I}nterval {G}raphs,
  and {G}raph {P}lanarity {U}sing {P}{Q}-{T}ree {A}lgorithms.
\newblock \emph{Journal of Computer and System Sciences}, 13\penalty0
  (3):\penalty0 335--379, 1976.
\newblock URL \url{https://doi.org/10.1016/S0022-0000(76)80045-1}.

\bibitem[Chaplick(2008)]{chaplick}
S.~Chaplick.
\newblock \emph{P{Q}{R}-{T}rees and {U}ndirected {P}ath {G}raphs}.
\newblock PhD thesis, University of Toronto, 2008.
\newblock URL \url{http://hdl.handle.net/1807/118597}.

\bibitem[Chaplick et~al.(2010)Chaplick, Gutierrez, L{\'{e}}v{\^{e}}que, and
  Tondato]{chaplick-gutierrez}
S.~Chaplick, M.~Gutierrez, B.~L{\'{e}}v{\^{e}}que, and S.~B. Tondato.
\newblock From {P}ath {G}raphs to {D}irected {P}ath {G}raphs.
\newblock In \emph{Graph Theoretic Concepts in Computer Science - 36th
  International Workshop, {WG} 2010}, volume 6410 of \emph{Lecture Notes in
  Computer Science}, pages 256--265, 2010.
\newblock URL \url{https://doi.org/10.1007/978-3-642-16926-7\_24}.

\bibitem[Corneil et~al.(1998)Corneil, Olariu, and Stewart]{corneil-olariu}
D.~G. Corneil, S.~Olariu, and L.~Stewart.
\newblock The {U}ltimate {I}nterval {G}raph {R}ecognition {A}lgorithm?
\newblock In \emph{Proceedings of the Ninth Annual {ACM-SIAM} Symposium on
  Discrete Algorithms}, pages 175--180. {ACM/SIAM}, 1998.
\newblock URL \url{http://dl.acm.org/citation.cfm?id=314613.314697}.

\bibitem[Dahlhaus and Bailey(1996)]{dah}
E.~Dahlhaus and G.~Bailey.
\newblock Recognition of {P}ath {G}raphs in {L}inear {T}ime.
\newblock In \emph{5th Italian Conference on Theoretical Computer Science
  (Revello, 1995)}, pages 201--210. World Scientific, 1996.

\bibitem[Dietz(1984)]{dietz}
P.~F. Dietz.
\newblock Intersection {G}raph {A}lgorithms.
\newblock Technical report, Cornell University, 1984.

\bibitem[Gavril(1974)]{gavril1}
F.~Gavril.
\newblock The {I}ntersection {G}raphs of {S}ubtrees in {T}rees are {E}xactly
  the {C}hordal {G}raphs.
\newblock \emph{Journal of Combinatorial Theory, Series B}, 16\penalty0
  (1):\penalty0 47--56, 1974.

\bibitem[Gavril(1975)]{gavril_DV_algorithm}
F.~Gavril.
\newblock A {R}ecognition {A}lgorithm for the {I}ntersection {G}raphs of
  {D}irected {P}aths in {D}irected {T}rees.
\newblock \emph{Discrete Mathematics}, 13\penalty0 (3):\penalty0 237--249,
  1975.
\newblock URL \url{https://doi.org/10.1016/0012-365X(75)90021-7}.

\bibitem[Gavril(1978)]{gavril_UV_algorithm}
F.~Gavril.
\newblock A {R}ecognition {A}lgorithm for the {I}ntersection {G}raphs of
  {P}aths in {T}rees.
\newblock \emph{Discrete Mathematics}, 23\penalty0 (3):\penalty0 211--227,
  1978.
\newblock URL \url{https://doi.org/10.1016/0012-365X(78)90003-1}.

\bibitem[Habib et~al.(2000)Habib, McConnell, Paul, and
  Viennot]{habib-mcconnell}
M.~Habib, R.~M. McConnell, C.~Paul, and L.~Viennot.
\newblock Lex-bfs and partition refinement, with applications to transitive
  orientation, interval graph recognition and consecutive ones testing.
\newblock \emph{Theoretical Computer Science}, 234\penalty0 (1-2):\penalty0
  59--84, 2000.
\newblock URL \url{https://doi.org/10.1016/S0304-3975(97)00241-7}.

\bibitem[Hsu(1992)]{hsu}
W.~Hsu.
\newblock A {S}imple {T}est for {I}nterval {G}raphs.
\newblock In \emph{Graph-Theoretic Concepts in Computer Science, 18th
  International Workshop, Proceedings}, volume 657 of \emph{Lecture Notes in
  Computer Science}, pages 11--16. Springer, 1992.
\newblock URL \url{https://doi.org/10.1007/3-540-56402-0\_31}.

\bibitem[Hsu and McConnell(2003)]{hsu-mcconnell}
W.~Hsu and R.~M. McConnell.
\newblock {PC} trees and circular-ones arrangements.
\newblock \emph{Theoretical Computer Science}, 296\penalty0 (1):\penalty0
  99--116, 2003.
\newblock URL \url{https://doi.org/10.1016/S0304-3975(02)00435-8}.

\bibitem[Korte and M{\"{o}}hring(1989)]{korte-mohring}
N.~Korte and R.~H. M{\"{o}}hring.
\newblock An {I}ncremental {L}inear-{T}ime {A}lgorithm for {R}ecognizing
  {I}nterval {G}raphs.
\newblock \emph{{S}{I}{A}{M} Journal on Computing}, 18\penalty0 (1):\penalty0
  68--81, 1989.
\newblock URL \url{https://doi.org/10.1137/0218005}.

\bibitem[Lekkeikerker and Boland(1962)]{lekkerkerker-boland}
C.~Lekkeikerker and J.~Boland.
\newblock Representation of a finite graph by a set of intervals on the real
  line.
\newblock \emph{Fundamenta Mathematicae}, 51\penalty0 (1):\penalty0 45--64,
  1962.

\bibitem[L{\'e}v{\^e}que et~al.(2009)L{\'e}v{\^e}que, Maffray, and
  Preissmann]{bfm}
B.~L{\'e}v{\^e}que, F.~Maffray, and M.~Preissmann.
\newblock Characterizing {P}ath {G}raphs by {F}orbidden {I}nduced {S}ubgraphs.
\newblock \emph{Journal of Graph Theory}, 62\penalty0 (4):\penalty0 369--384,
  2009.

\bibitem[McConnell and Spinrad(1999)]{mcconnel-spinrad}
R.~M. McConnell and J.~P. Spinrad.
\newblock Modular decomposition and transitive orientation.
\newblock \emph{Discrete Mathematics}, 201\penalty0 (1-3):\penalty0 189--241,
  1999.
\newblock URL \url{https://doi.org/10.1016/S0012-365X(98)00319-7}.

\bibitem[Monma and Wei(1986)]{mew}
C.~L. Monma and V.~K. Wei.
\newblock Intersection {G}raphs of {P}aths in a {T}ree.
\newblock \emph{Journal of Combinatorial Theory, Series B}, 41\penalty0
  (2):\penalty0 141--181, 1986.

\bibitem[Novick(1990)]{sc14}
M.~B. Novick.
\newblock Parallel {A}lgorithms for {I}ntersection {G}raphs.
\newblock Technical report, Cornell University, 1990.

\bibitem[Renz(1970)]{renz}
P.~Renz.
\newblock Intersection {R}epresentations of {G}raphs by {A}rcs.
\newblock \emph{Pacific Journal of Mathematics}, 34\penalty0 (2):\penalty0
  501--510, 1970.

\bibitem[Rose et~al.(1976)Rose, Tarjan, and Lueker]{sc18}
D.~J. Rose, R.~E. Tarjan, and G.~S. Lueker.
\newblock Algorithmic {A}spects of {V}ertex {E}limination on {G}raphs.
\newblock \emph{SIAM Journal on Computing}, 5\penalty0 (2):\penalty0 266--283,
  1976.

\bibitem[Sch\"{a}ffer(1993)]{schaffer}
A.~Sch\"{a}ffer.
\newblock A {F}aster {A}lgorithm to {R}ecognize {U}ndirected {P}ath {G}raphs.
\newblock \emph{Discrete Appl. Math.}, 43\penalty0 (3):\penalty0 261--295,
  1993.
\newblock ISSN 0166-218X.
\newblock URL \url{https://doi.org/10.1016/0166-218X(93)90116-6}.

\bibitem[Tarjan and Yannakakis(1984)]{sc19}
R.~E. Tarjan and M.~Yannakakis.
\newblock Simple {L}inear-{T}ime {A}lgorithms to {T}est {C}hordality of
  {G}raphs, {T}est {A}cyclicity of {H}ypergraphs, and {S}electively {R}educe
  {A}cyclic {H}ypergraphs.
\newblock \emph{SIAM Journal on computing}, 13\penalty0 (3):\penalty0 566--579,
  1984.

\end{thebibliography}
\label{sec:biblio}

\end{document}